\pdfoutput=1

\documentclass[journal,11pt,onecolumn]{IEEEtran}


\usepackage{hyperref}
\usepackage{graphicx}
\usepackage[cmex10]{amsmath}
\usepackage{amsfonts}
\usepackage{cite}

\usepackage[linesnumbered,ruled]{algorithm2e}
\SetAlFnt{\small}
\SetAlCapFnt{\small}
\SetAlCapNameFnt{\small}
\usepackage{algorithmic}
\algsetup{linenosize=\tiny}

\usepackage{dsfont, color, soul}
\usepackage{amsmath}
\usepackage{amssymb}
\usepackage{booktabs}

\usepackage{amsthm}

\usepackage{siunitx}

\sisetup{
detect-all,
range-phrase=--,
range-units=single}

\usepackage{textcomp}

\usepackage[none]{hyphenat} 

\DeclareMathOperator*{\argmax}{arg\,max}
\DeclareMathAlphabet\mathbfcal{OMS}{cmsy}{b}{n}

\newtheorem{theorem}{Theorem}

\newtheorem{definition}{Definition}

\newcommand*\tageq{\refstepcounter{equation}\tag{\theequation}}

\allowdisplaybreaks 

\newcommand{\MAP}{{\scriptscriptstyle\textup{MAP}}}

\ifCLASSINFOpdf
\else
\fi

\hyphenation{op-tical net-works semi-conduc-tor}

\begin{document}

\setlength{\textfloatsep}{12pt}
\setlength{\abovecaptionskip}{-1pt}
\setlength{\belowcaptionskip}{-1pt}

\setlength{\parskip}{1pt}
\setlength{\parsep}{1pt}
\setlength{\headsep}{1pt}
\setlength{\topskip}{1pt}
\setlength{\topsep}{1pt}
\setlength{\partopsep}{1pt}

\title{Geo-spatial Location Spoofing Detection for Internet of Things}

\author{\IEEEauthorblockN{Jing Yang Koh, Ido Nevat, Derek Leong, and Wai-Choong Wong \thanks{A shorten version of this work
has been accepted to the IEEE IoT Journal (IoT-J) on 08-Feb-2016. 
}
\thanks{J.~Y.~Koh, I.~Nevat, and D.~Leong are with the Institute for Infocomm Research (I\textsuperscript{2}R), Singapore. 
W.-C.~Wong is with the Department of Electrical and Computer Engineering, National University of Singapore.}
\thanks{The work of J.~Y.~Koh was supported in part by the Agency for Science,
Technology and Research (A*STAR) Graduate Scholarship.
The work of I.~Nevat was supported in part by A*STAR, Singapore, under SERC Grant 1224104048.
The work of D.~Leong was supported in part by A*STAR, Singapore, under SERC Grant 1224104049.} 
}

}

\maketitle

\begin{abstract}
We develop a new location spoofing detection algorithm for geo-spatial tagging and location-based services in the Internet of Things (IoT), called Enhanced Location Spoofing Detection using Audibility (ELSA) which can be implemented at the backend server without modifying existing legacy IoT systems.
ELSA is based on a statistical decision theory framework and uses two-way time-of-arrival (TW-TOA) information between the user's device and the anchors.
In addition to the TW-TOA information, ELSA exploits the implicit available audibility information to improve detection rates of location spoofing attacks.
Given TW-TOA and audibility information, we derive the decision rule for the verification of the device's location, based on the generalized likelihood ratio test.
We develop a practical threat model for delay measurements spoofing scenarios, and investigate in detail the performance of ELSA in terms of detection and false alarm rates.
Our extensive simulation results on both synthetic and real-world datasets demonstrate the superior performance of ELSA compared to conventional non-audibility-aware approaches.
\end{abstract}

\begin{IEEEkeywords}
Location spoofing detection,
Internet of Things,
Geo-spatial tagging,
Audibility,
Likelihood ratio test,
Time of arrival.
\end{IEEEkeywords}

\IEEEpeerreviewmaketitle

\section{Introduction}
Wireless localization has been an active research topic in the last decade due to its significance in many existing applications.
In particular, the area of detecting  \emph{location spoofing} attempts has become increasingly important.
This is due to its key role in proliferating applications such as location-based services \cite{hasan2015, yan2012}, intelligent transport systems \cite{yan2014, yan20142, yan2015,  mala2014}, mobile and ad hoc networks \cite{fiore2013, yan2015}, wireless sensor networks \cite{neal2005, capkun2008, wei2013}, and other mission-critical systems \cite{misra2009}.
With the expansion of the Internet of Things (IoT)~\cite{yang2015}, more and more users are expecting reliable and trustworthy estimates of the locations of the ``things'' in their systems.
Without reliable information, location-based services may be severely disrupted, causing inconvenience to end users or even resulting in the loss of human lives especially in hazardous applications.
In fact, high accuracy and precision are key requirements in many IoT applications today~\cite{yang2015}.

Spatially deployed \emph{anchors} (or reference nodes) can be used to estimate the distance of targets in the range-based time-of-arrival (TOA) localization techniques~\cite{neal2005, guvenc2009, 802153}.
Specifically, we focus on the TOA-based two-way ranging (TWR) protocol~\cite{neal2005,  guvenc2009, 802153} where a \emph{target} (user device or tag) simply needs to reply to range request packets sent from the anchors.
This enables the anchors to estimate their distances from the target by making use of the time of flight (delay) information.
However, a malicious target can attempt to spoof its location by affecting the delay measurements received by the anchors.
Therefore, many location spoofing detection schemes~\cite{yan2012, yan2014, yan2015,  mala2014, fiore2013, capkun2008, wei2013, chiang2012, basilico2014} have been proposed to deal with this threat.
Typically, the detection system uses trilateration (or multilateration) \cite{neal2005, misra2009} to fuse three or more distance estimates to localize a node in two dimensions~\cite{fiore2013, capkun2008, guvenc2009, basilico2014}, reducing ambiguity in the location estimates.

However, we show that this fundamental requirement of distance estimates from at least three audible anchors can be relaxed --- localization can often be done reliably with fewer \emph{audible} anchors.
Two nodes A and B are said to be audible to each other if they are able to successfully decode the transmitted signals from each other.
In the context of this paper, the received signal strength has to be above a predefined threshold in order for them to be audible.
This will be formally defined in Definition 2.
In contrast to prior works that simply ignore inaudible anchors (e.g., the inaudible anchors are excluded from the trilateration calculations), we exploit the implicit \emph{inaudibility} (or outage)  information to improve the location spoofing detection rate at essentially no additional cost.

Using the concept of audibility, we develop a generalized likelihood ratio test (GLRT)~\cite{neyman1933} called Enhanced Location Spoofing Detection using Audibility (ELSA) to detect location spoofing attacks.
The statistical GLRT hypothesis testing technique is a well-recognized approach that can be applied to the received TOA delay measurements to distinguish an honest target from a malicious target.
We choose TOA-based localization as it is widely used (e.g., in Global Positioning System (GPS)) and provides the best accuracy (e.g., in the range of centimeters for ultra-wide band (UWB) devices~\cite{ dardari2009, wym2012}) compared to other range-based (e.g., received signal strength (RSS)) and range-free approaches~\cite{he2003}.
We also consider GPS-denied indoor or urban environments where the GPS measurements are not readily available~\cite{guvenc2009}.
We then study the effectiveness of ELSA under adversarial settings and show that it significantly outperforms the conventional non-audibility-aware TOA-based approaches (e.g., \cite{yan20142}, which adopts a similar likelihood ratio test approach but does not consider audibility in its likelihood probability functions).

\begin{figure}
  \centering
  \includegraphics[width=3in]{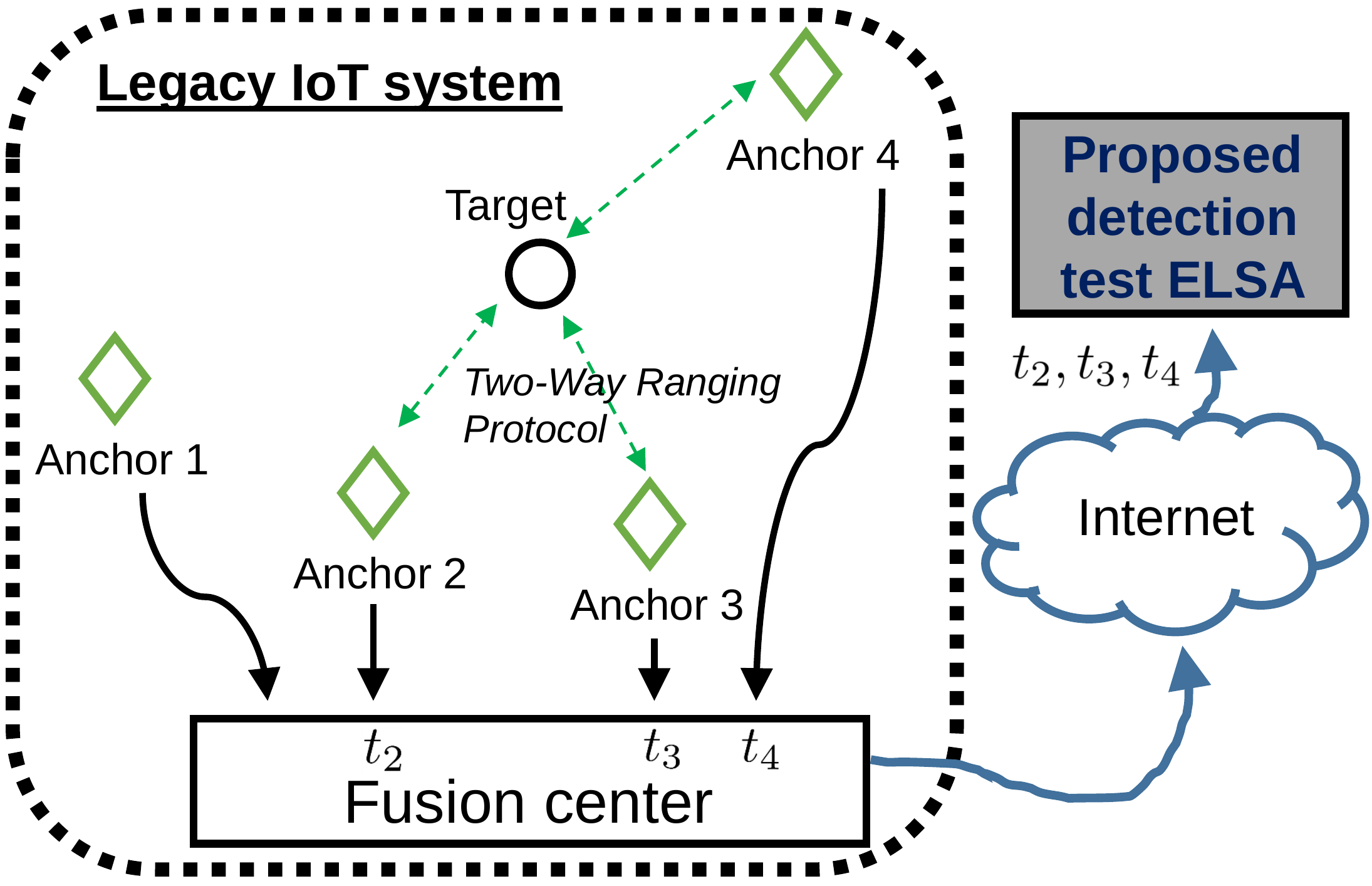}
  \caption{System model with a target, multiple anchors, and a fusion center. The proposed detection test can be implemented at the backend server which receives the TOA delay measurements from the anchors via the fusion center, without changing the legacy IoT system.}
  \label{fig:system_model}
\end{figure}

ELSA can be applied to a wide range of legacy IoT systems and to emerging applications to improve localization and the detection of spoofing attempts at essentially no additional cost because it uses implicit audibility information available in conventional TOA-based localization systems.
This allows our approach to be implemented solely at the backend server without changing existing client IoT devices or network communications protocols (see Fig.~{\ref{fig:system_model}}).
ELSA is particularly beneficial when low-level information (e.g., RSS readings from the radio module) is not available because of device limitations, or when devices have limited resources for running computationally expensive cryptographic operations.
An example of such a use case is in the tagging of physical objects (high-value assets, equipment, luggage, personal wallets, etc.) to facilitate easy retrieval (e.g., see\cite{pixieweb, decawaveweb}).
Without a location spoofing detection test like ELSA, an adversary may successfully steal a tagged item without detection.
Another potential application is in city tagging~\cite{iotweb}, where users can virtually tag places or objects and add a description of the tagged place.
With ELSA, it will be difficult for malicious users to spoof their location and tag a false place or object to mislead other users.

\subsection{Related Work}
Location verification schemes have mainly rely on either the TOA or RSS range-based approaches where the target and anchors are also known as the prover and verifiers respectively.
In range-based approaches, deterministic geometrical boundaries are often used to decide whether to accept or reject localization claims.
Vora \textit{et al.}~\cite{vora2006} adopt a geometric approach to detect location spoofing attacks, using sharply defined boundaries for acceptance (circular zone) and rejection (polygonal zone), with an ambiguity zone between the two boundaries.
Audibility is assumed to be guaranteed within the circular acceptance zone.
Such deterministic methods do not account for the variance of the naturally occurring noise.
Our statistical model, on the other hand, generalizes this approach by accounting for the naturally occurring observation noise via a Gaussian noise term (see (1)) $W_i \sim \mathcal{N}(0, \sigma^2_W)$.
The geometric approach is therefore a special case of our model where $W_i \sim \mathcal{N}(0, 0)$.
Therefore, our stochastic model provides a better representation of the real life wireless conditions  by quantifying the probability of being audible or inaudible and accounts for the naturally occurring observation noise.

In addition, several works used cryptographic security protocols and message exchanges to make it difficult for an attacker to spoof his location.
The work in \cite{hasan2015} presents a framework for using witness nodes to validate the location of targets via a cryptographic asserted location proof protocol to verify their distances to the target.
Next, \cite{fiore2013} presents a similar but distributed cooperative witnesses protocol to verify location claims through a series of message exchanges.
Likewise, \cite{wei2013} proposes a method to check if the target lies within a claimed region and whether the claimed location exceeds a reasonable bound.
Distance bounding protocols (e.g.,~\cite{chiang2012}) have also been proposed to verify that a target is located within a geometric region from the anchors. 
This is achieved by rapidly exchanging of messages based on random nonces to bound the distances between the target and the anchors.

Special features such as 
anonymous beacons are used in \cite{mala2014} to verify a target location.
Capkun \textit{et al.}~\cite{capkun2008} further use hidden and mobile anchors not known by the adversary to verify the location of targets via a 
simple challenge-response scheme.
 Basilico \textit{et al.}~\cite{basilico2014} model the location verification problem as a non-cooperative two-player game between the anchors and the malicious target to compute the best placement for the anchors.
Our work is similar to~\cite{yan2012, yan2014, yan20142, yan2015} which use the information theoretic likelihood ratio test (LRT) approach to verify the location of targets via RSS readings.
We also adopt the LRT framework but tackle the additional challenge of having the anchors localize the target themselves.
Furthermore, we exploit audibility information (which is often ignored) to improve the detection of the location spoofing.

\subsection{Our Contributions}

To the best of our knowledge, this is the first attempt to model and incorporate audibility information to improve location spoofing detection using a statistical approach based on the \emph{missing-not-at-random (MNAR)}~\cite{rubin1976} concept (explained in Section II).
The key contributions of this paper can be summarized as follows:
\begin{itemize}
  \item We introduce the notion of \textit{audibility} and develop a framework for using it to improve the detection of location spoofing attempts.
  \item We design ELSA, an audibility-aware GLRT to detect location spoofing attempts, and prove that it has better detection performances than the conventional non-audibility-aware GLRT.
  \item We verify the efficacy of ELSA using both extensive simulations and a real-world experimental dataset.
\end{itemize}

\subsection{Notation}

Uppercase letters denote random variables and the corresponding lowercase letters their realizations, and bold letters represent vectors.
With a slight abuse of notation, we use lowercase $p(x)$ to represent both the probability density function (pdf) and probability mass function (pmf), and uppercase $P($``event''$)$ to represent the probability of an event.
The normal pdf is represented by $\mathcal{N}(x; \mu, \sigma^2) = \frac{1}{\sigma \sqrt{2\pi}} e^{-\frac{(x-\mu)^2}{2\sigma^2}}$, and the standard normal cumulative distribution function (cdf) by $\Phi(x) = \frac{1}{\sqrt{2\pi}} \int^{x}_{-\infty} e^{-\frac{t^2}{2}} dt$.
Finally, we use $\mathds{1}(\cdot)$ to denote the indicator function which equals one if its argument $(\cdot)$ is true and zero otherwise.

The rest of this paper is organized as follows.
Section II presents a motivating example for our proposed framework.
The analytic model is introduced in Section III and the problem formulation is presented in Section IV.
Section V discusses our experimental results.
Finally, conclusions are drawn in Section VI.

\section{Motivating Example For Proposed Audibility Framework}
\label{sec:motivatingeg}

We first illustrate with an example
the concept of audibility before elaborating an example on how audibility aids in detecting the attacks.

\subsection{How Audibility Aids in Location Spoofing Detection}
Using the conventional trilateration technique \cite{neal2005, misra2009} (without utilizing audibility information), distance estimates from at least three different non-collinear anchors are needed to localize a target.
Otherwise, there may exist ambiguity when there are only two delay measurements.
For example, the target may be equally likely to be at two separate regions as seen from the target's likelihood heat map in Fig.~\ref{fig:toatoyexample}a.
However, this ambiguity can be significantly reduced once we incorporate the audibility information (see Fig.~\ref{fig:toatoyexample}b).
As a result, the bottom right region is now unlikely since there exists a nearby anchor that does not receive any delay measurement (not audible).
Therefore, by taking advantage of the ``missing delay measurements'' or the inaudibility information, we are able to relax the fundamental three distance estimates assumption without using any additional hardware or message exchanges. This leads to an improved accuracy of the TOA localization algorithm at no extra cost.
The audibility information can be exploited because the missing observations are \emph{Missing Not At Random (MNAR)} as termed by Rubin in his seminal work~\cite{rubin1976} where he developed a statistical framework to account for missing data.
Thus, we should not ignore the missing delay observations as it also provides additional information about the target location.

\begin{figure}
\centering \footnotesize
\includegraphics[width=3in]{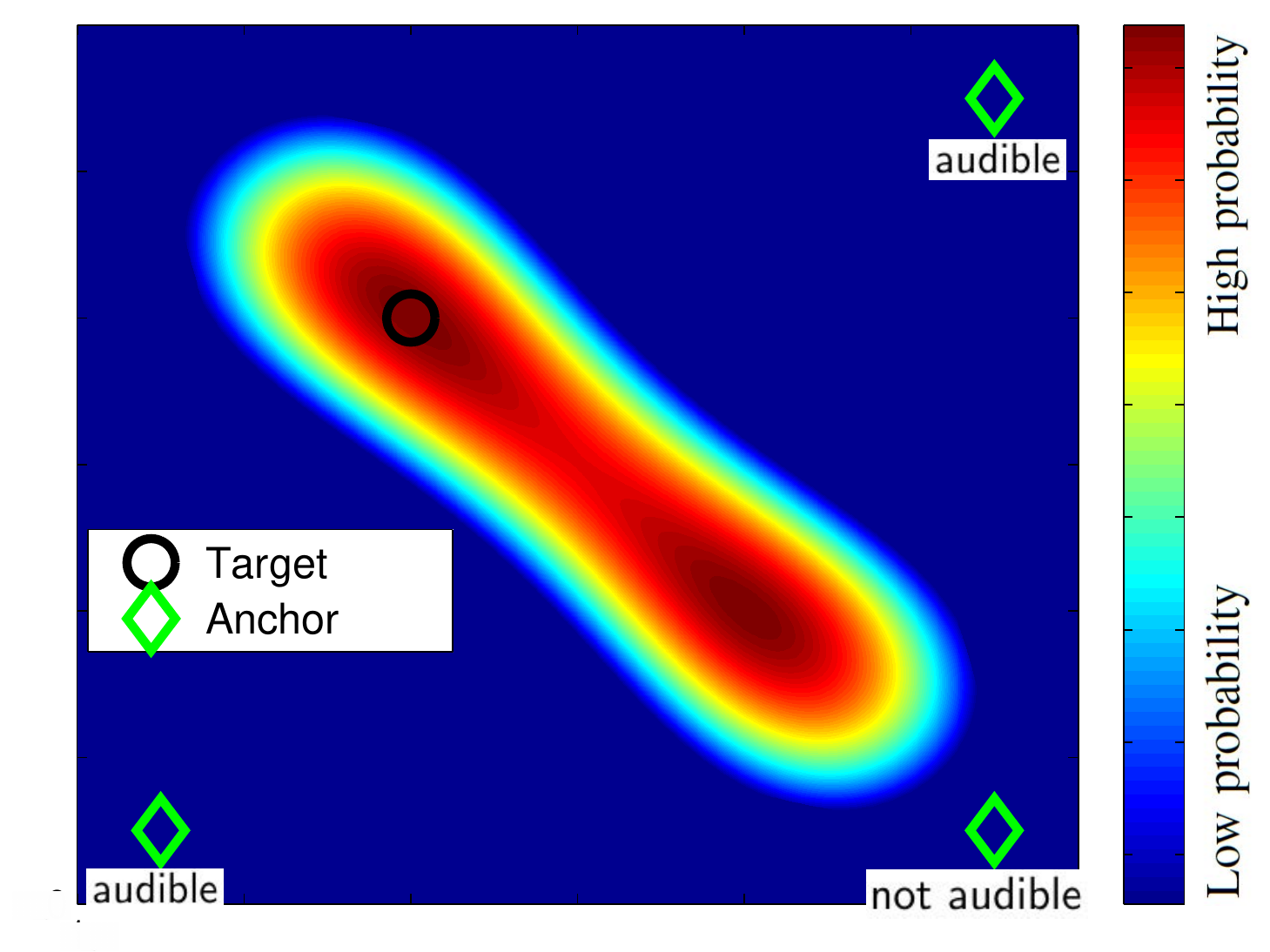}
\\[0.25em]
(a) Conventional TOA likelihood surface.
\\[0.25em]
\includegraphics[width=3in]{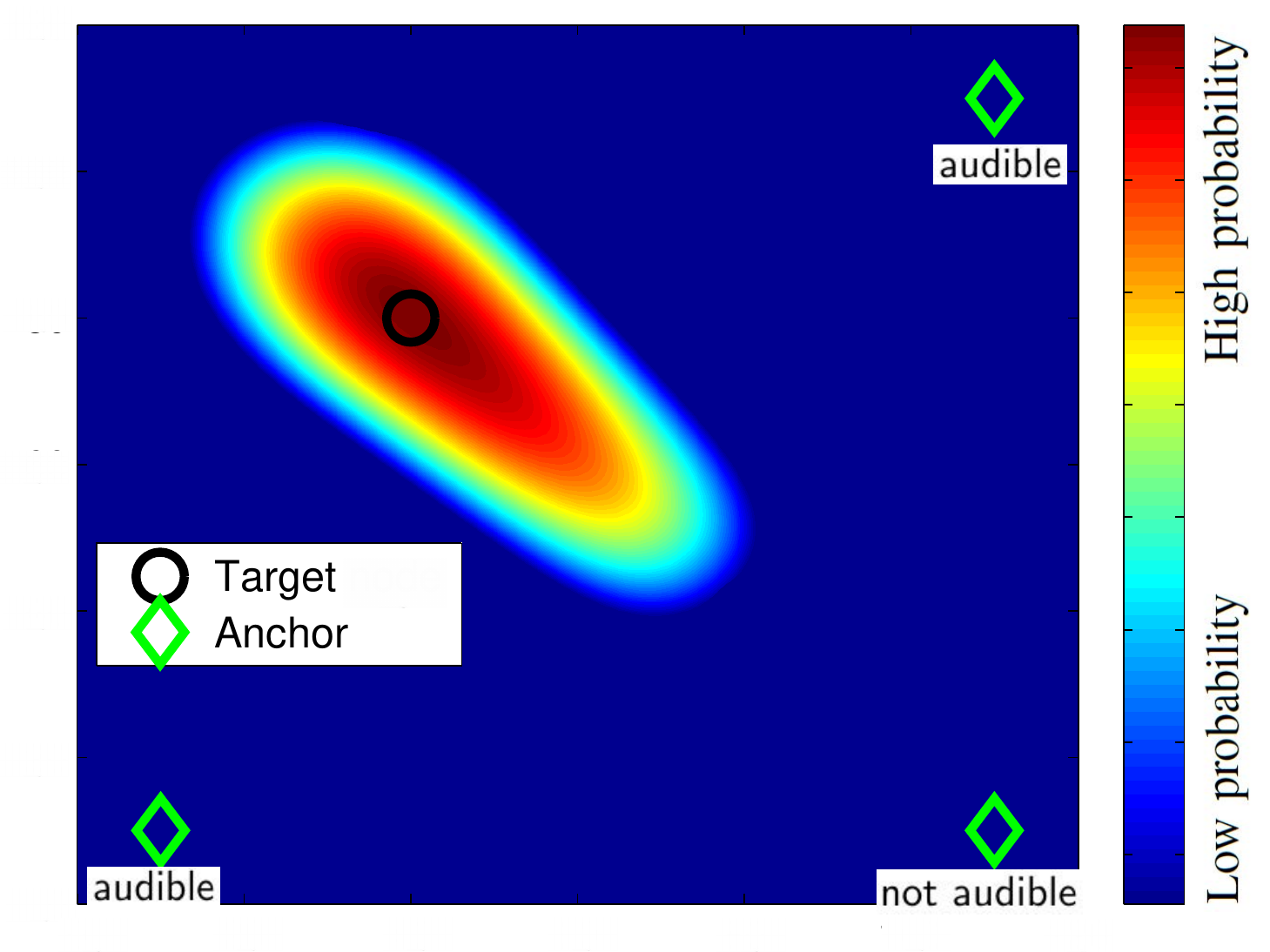}
\\[0.25em]
(b) TOA likelihood surface with audibility information.
\\[0.25em]
\caption{Log-likelihood heat map for the location of a target with three anchors (of which two are audible). Regions with higher probabilities for the target's location are represented by red. }
\label{fig:toatoyexample}
\end{figure}

\subsection{Toy Example on the Use of Audibility Information}
Shown in Fig.~\ref{fig:howitworks} is a room with an anchor at each corner.
Suppose that a malicious target at the left side of the room (denoted by the circle) is in the audible range of two anchors and wishes to spoof its location to appear at the other side of the room (marked with a cross).
If the target is controlled by an adversary, it can add additional delays to increase its TOA delay measurement~\cite{ misra2009, chiang2012, basilico2014, capkun2010, tapo2014} and hence increase the estimated distance from itself to the two anchors. Otherwise, an external adversary may also selectively jam the wireless channel to introduce delays~\cite{abd2014, pav2014, lee2009}.
The threat model will be detailed in Section~\ref{threatmodel}.
Using the conventional approaches, a detection system will not be able to detect the location spoofing attempt as there are insufficient contradictory information to raise suspicions.
However, using the additional implicitly available audibility information as input, it is now unlikely that the target is located at the cross since it is not in the range of the two anchors at the right side of the room.
The target is more likely to be located at the square shown in  Fig.~\ref{fig:howitworks}. (Note that in actual scenarios, the location estimates may be a small region of equally likely points (see Fig.~\ref{fig:toatoyexample}) instead of an exact location point as shown above, but the concept remains the same.)
Hence, we can detect the location spoofing attack by comparing the likelihood probabilities.

\begin{figure}
\centering
\includegraphics[width=3in]{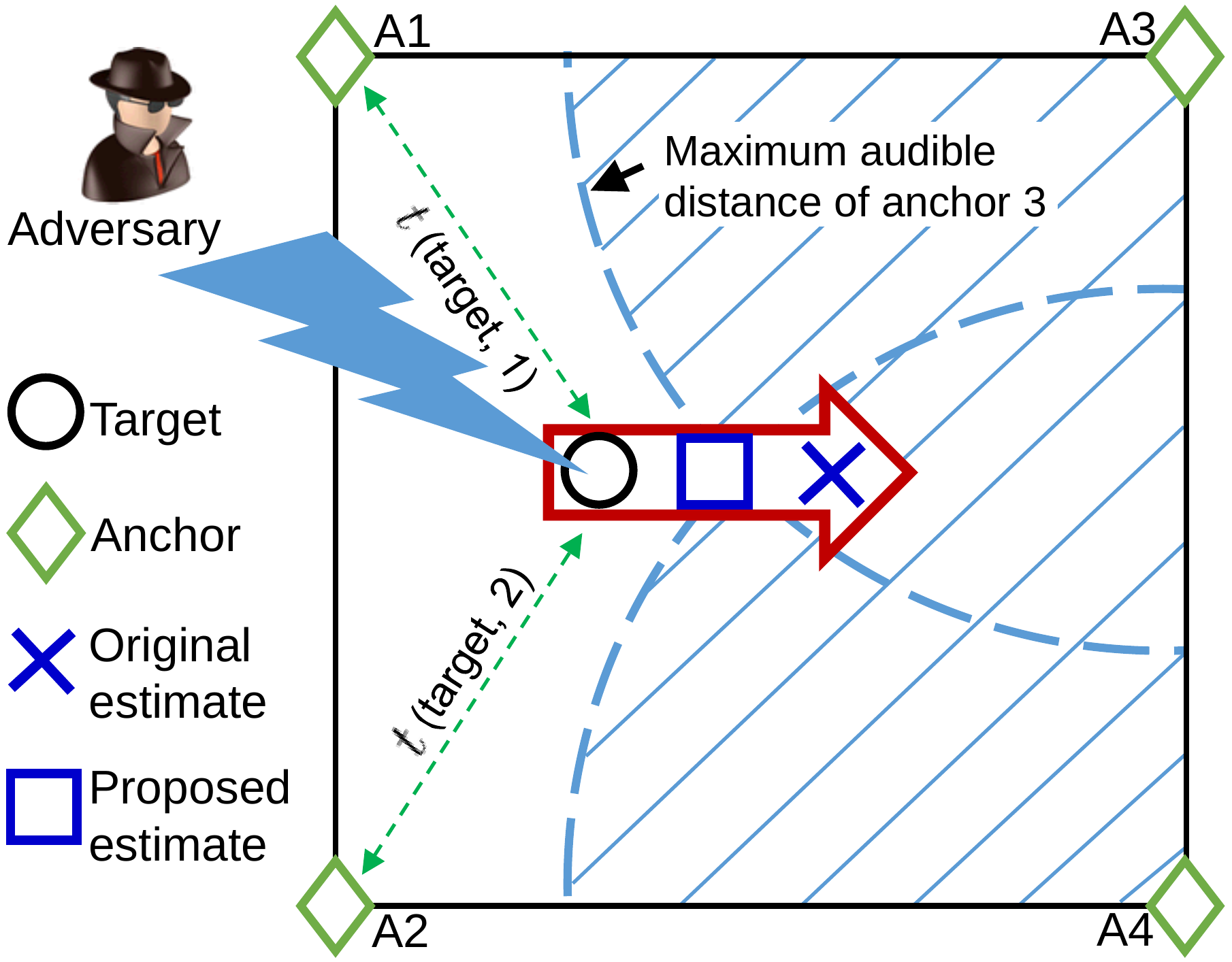}
\caption{Illustration of proposed method where a malicious target attempts to spoof its location by adding delays to the delay measurements $t(\text{target},\text{anchor})$.  }
\label{fig:howitworks}
\end{figure}

\section{Network Model}
In this section, we introduce the definitions for audibility and describe our system and threat models for the location spoofing detection system which uses the TOA-based Two-Way Ranging (TWR) protocol.

\subsection{Connectivity Model}
In order for two nodes A and B to communicate with each other, the transmitted signals should be audible to the other party.
This is modeled as the widely used power loss model~\cite{rappaport2001}.

\begin{definition} [Power loss model]
The received signal power by a node~\textup{A} located at $\mathbf{\Theta}_\textup{A}=\left[x^\textup{(A)}, \;\;y^\textup{(A)}\right]$
from a signal sent by node~\textup{B} which is located at
$\mathbf{\Theta}_\textup{B}=\left[x^\textup{(B)}, \;\;y^\textup{(B)}\right]$ is given by
\begin{align*}
P_R = P_T-10 \alpha \log \frac{d\left(\textup{A},\textup{B}\right)}{d_0}+\epsilon,
\end{align*}
where
$P_T$ is the transmitted power by node~\textup{B}, $\alpha$ is the path-loss exponent,
\newline
$d\left(\textup{A},\textup{B}\right) := \sqrt{\left(x^\textup{(A)}-x^\textup{(B)}\right)^2+\left(y^\textup{(A)}-y^\textup{(B)}\right)^2  }$ is the Euclidean distance between nodes \textup{A} and \textup{B}, $d_0$ is a reference distance and $\epsilon \sim \mathcal{N} \left(0,\sigma_{\epsilon}^2\right)$ represents the shadowing effect.
\end{definition}

If node~B is able to receive signals transmitted by node~A, then the former is said to be audible. More formally, we define audibility as the following.
\begin{definition} [Audibility]
Node~\textup{B} is said to be audible to node~\textup{A} if
\begin{align*}
P_R = P_T-10 \alpha \log \frac{d\left(\textup{A},\textup{B}\right)}{d_0}+\epsilon \geq \lambda,
\end{align*}
where $\lambda$ is a predefined threshold representing the receiver's sensitivity.
\end{definition}

\subsection{Two-Way Ranging (TWR) Distance Estimation Protocol}

\begin{figure}
\centering
\includegraphics[width=3.3in]{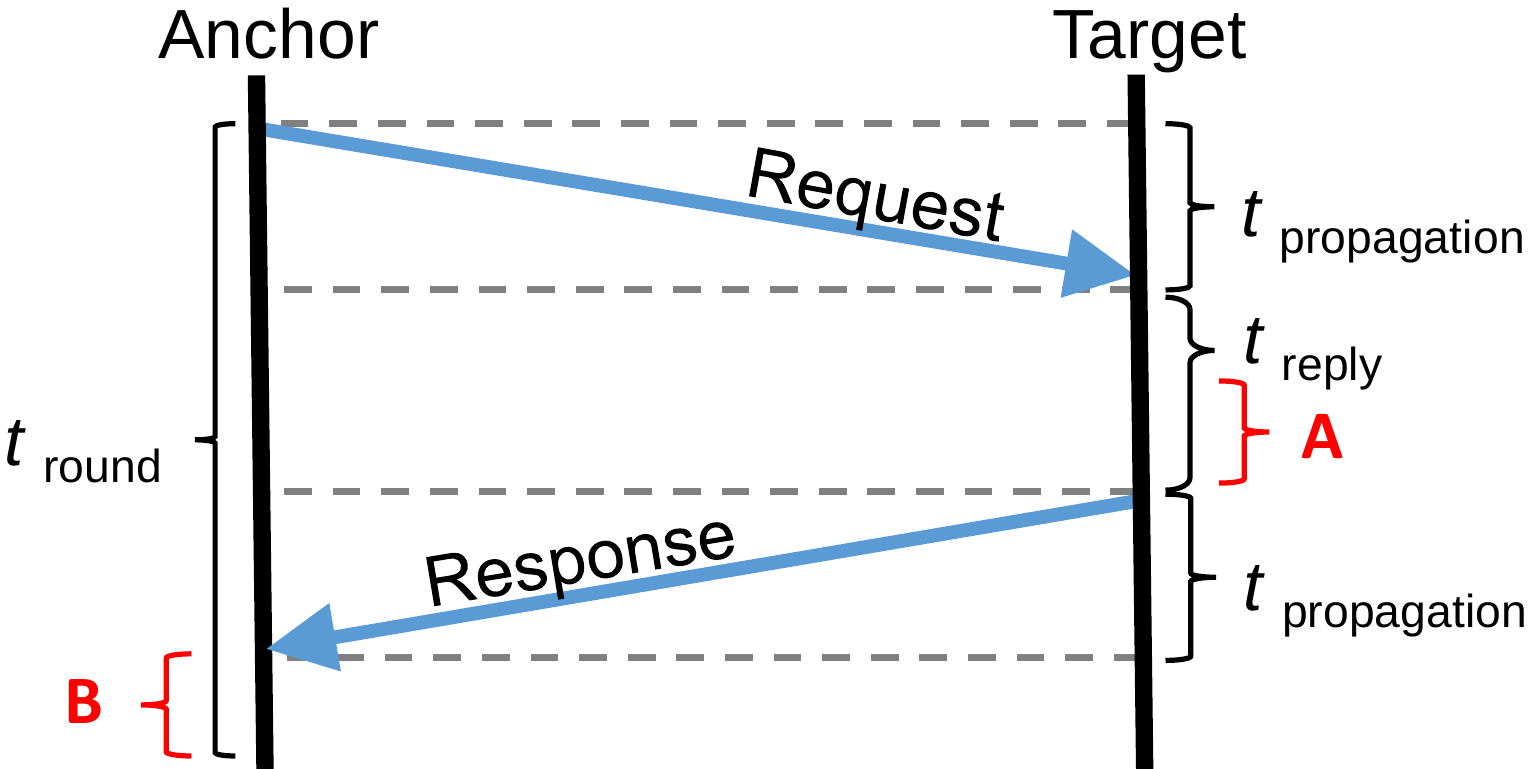}
\caption{Message exchange of the Two-Way Ranging (TWR) distance estimation protocol~\cite{twr}. 
The two arrows marked with letters A and B represent the points of attack by an adversary (see our Threat Model in Section III-D).}
\label{fig:twr}
\end{figure}

The TWR protocol is a time of arrival (TOA)/flight (TOF) based ranging method specified in the IEEE 802.15.4a standard~\cite{twr}. It is gaining popularity especially in small low-cost UWB devices. It allows two communicating devices to estimate their distance from each other without needing time synchronization.
First, the anchor sends a range request packet to an unlocalized target.
The latter then waits for some known time $t_\text{reply}$ before sending a response packet back to the anchor. The value of $t_\text{reply}$ is assumed to be known to both devices.
Assuming that there are no measurement errors, the anchor is able to obtain the round trip time of the two packets $t_\text{round}$ by subtracting the time it first sent a request packet from the time it received the response packet.
Since
	\begin{equation*}
	t_\text{round} = 2\times t_\text{propagation} + t_\text{reply},
	\end{equation*}
the value of the packet propagation \textit{delay} or $t_\text{propagation} = \frac{t_\text{round} - t_\text{reply}}{2}$ can be determined and subsequently the distance between the target and the anchor can be computed as follows:
	\begin{equation*}
	d(\text{target},\text{anchor}) = t_\text{propagation} \times v_p
	\end{equation*}
where $v_p$ is the signal propagation speed.
No time synchronization between the two nodes is required in the TWR protocol as the anchor uses its own local clock information to infer distance. This advantage enables the protocol to be used even with low cost RFID tags where time synchronization is not possible~\cite{lanz2011}. With sufficient range-based distance estimates, a node can be localized using the trilateration or multilateration techniques~\cite{neal2005, misra2009}.

\subsection{System Model}
We consider a scenario where a fusion center receives some delay measurements from its anchors (also known as reference nodes) and transmits the measurements to a backend server for verifying a target's location.
We present the considered wireless system with the following assumptions:

\begin{enumerate}
  \item Assume a wireless network with $n$ static anchors  where the location of the $i^{th}$ anchor (verifier) is denoted by
	\begin{equation*}
	{\bf x}_{i} = [x_{i}, \,\, y_{i}],
	\end{equation*}
where its 2D coordinates $x_{i}, y_{i} \in \mathbb{R}$ for $i\in\{1, \dots, n\}$.

  \item The true location of the target (prover) is denoted by
	\begin{equation*}
	\mathbf{\Theta} = [x_{\theta}, \,\, y_{\theta}],
	\end{equation*}
where its 2D coordinates $x_{\theta}, y_{\theta} \in \mathbb{R}$.
Depending on the deployment scenario, we assume that there is a prior $p(\mathbf{\Theta})$ for the target. A uniform prior can be assigned if the target is equally likely to exist anywhere in the considered region.

  \item We consider a scenario where the TWR protocol~\cite{twr} is used (see Fig.~\ref{fig:twr}). Each anchor $i$ in the communication range of the target will receive a delay measurement~\cite{neal2005} which can be represented by:
	\begin{equation}
	t_i = \frac{d(\mathbf{\Theta}, {\bf x}_i)}{v_p} + W_i,
	\label{eqn:anchornodetoa}
	\end{equation}
	where $d({\bf a}, {\bf b})$ is the Euclidean distance between two locations ${\bf a}, {\bf b}$ and is given by
	\begin{equation}
	d({\bf a}, {\bf b}) = \sqrt{(a_x - b_x)^2 + (a_y - b_y)^2},
	\label{eqn:eucdist}
	\end{equation}
$v_p$ is the signal propagation speed and $W_i$ is the time delay error assumed to be an i.i.d. Gaussian random variable\footnote{Note that $W_i$ may also be come from any other known parametric distribution.} given by $W_i \sim \mathcal{N}(0, \sigma^2_{W})$.

  \item In our audibility model, each anchor $i$ in the communication range of the target will receive a signal with a received power $P_i$ (or received signal strength (RSS)) that is equal or higher than the minimum signal receiving threshold $\lambda$. We use the widely accepted log-normal propagation model~\cite{neal2005} to estimate the received power of the signal:
	\begin{equation}
	P_i = P_{t} - 10 \alpha \log \frac{d(\mathbf{\Theta}, {\bf x}_i)}{d_0} + \epsilon_i \geq \lambda,
	\label{eqn:anchornoderecrss}
	\end{equation}
	where $P_t$ is the received power from the transmitter at a reference distance $d_0$ (typically \SI{1}{m}), $\alpha$ is the path loss exponent,
and $\epsilon_i$ is the received power error assumed to be an i.i.d. Gaussian random variable given by $\epsilon_i \sim \mathcal{N}(0, \sigma^2_{\epsilon})$.

  \item If an anchor $i$ does not receive any signal from the target, we can treat the received signal as having a received power $P_i$ that is less than the minimum signal receiving threshold $\lambda$. i.e.,
$	P_i  < \lambda.$

  \item We let $r_i$ be an indicator variable that depends on whether the anchor $i$ receives a delay measurement from the target (see \eqref{eqn:anchornoderecrss}):
	\begin{equation}
  	r_i = \left\{ \begin{array}{rl}
	1 &\mbox{ if } P_i \geq \lambda, \\
 	0 &\mbox{ otherwise.}
       \end{array} \right.
	\end{equation}

\end{enumerate}

\textit{Empirical Support for Chosen TOA and RSS Models: }
Our chosen TOA and RSS models in \eqref{eqn:anchornodetoa} and \eqref{eqn:anchornoderecrss} respectively are supported by the experimental measurements obtained from \cite{patwari2003}. The TOA and RSS measurements are plotted in Figs.~\ref{fig:toa_line_fit} and \ref{fig:rss_line_fit} respectively.
As seen from the figures, the zero-mean Gaussian noise and linearity assumptions (see ``Robust Fit'', a MATLAB function which uses reweighted least squares) are reasonable and provide good representation of the actual data.
A Kolmogorov-Smirov (KS) test was also used in \cite{patwari2003} which showed that the Gaussian assumption is valid under a 0.05 significance level.

\begin{figure}[!ht]
  \centering
  \includegraphics[width=3in]{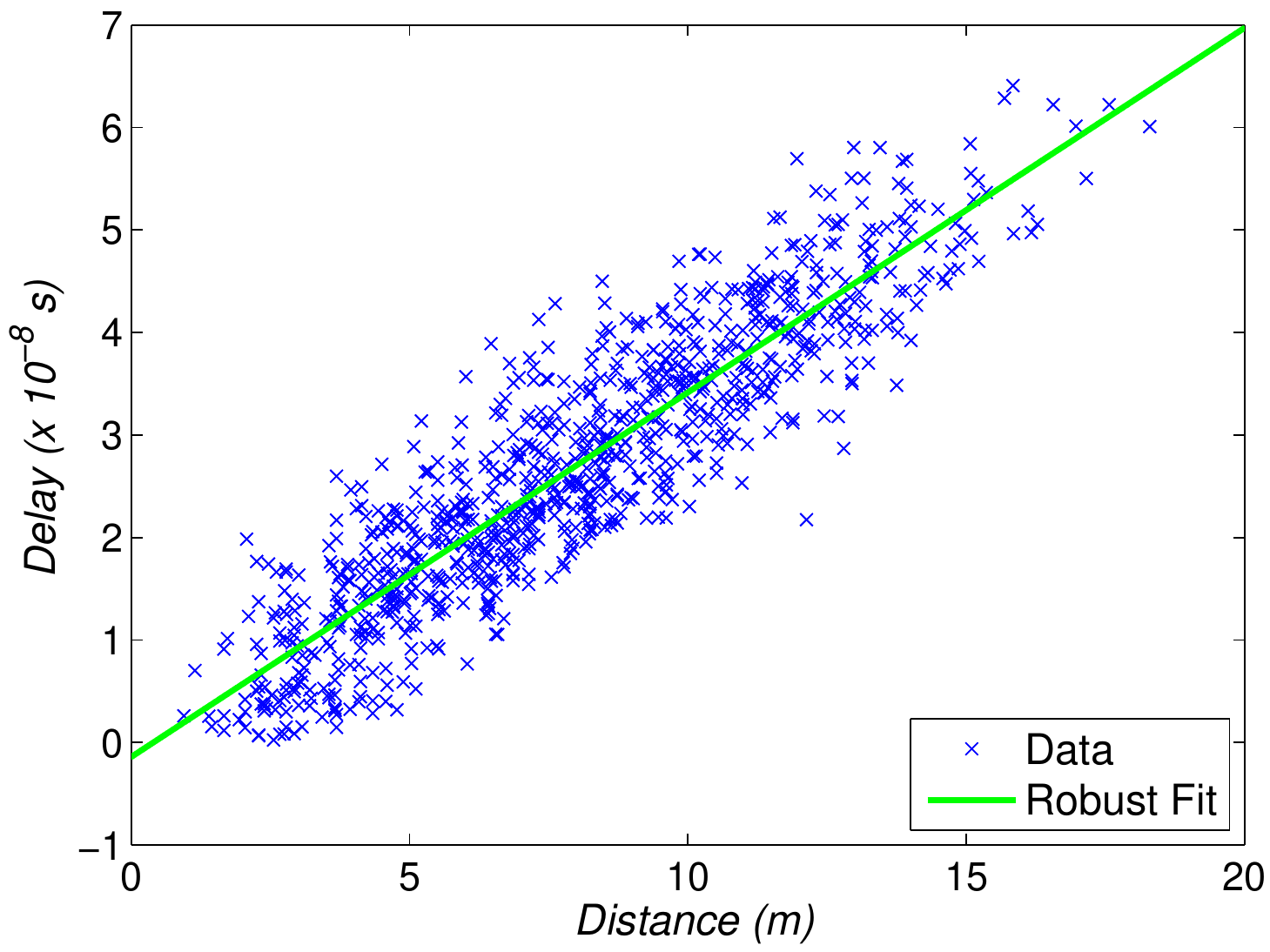}
  \caption{TOA delay (from empirical data~\cite{patwari2003}) as a function of distance between two nodes.
A Kolmogorov-Smirov (KS) test was used in\mbox{\cite{patwari2003}} which showed that the Gaussian assumption is valid under a 0.05 significance level.}
  \label{fig:toa_line_fit}
\end{figure}

\begin{figure}[!h]
  \centering
  \includegraphics[width=3in]{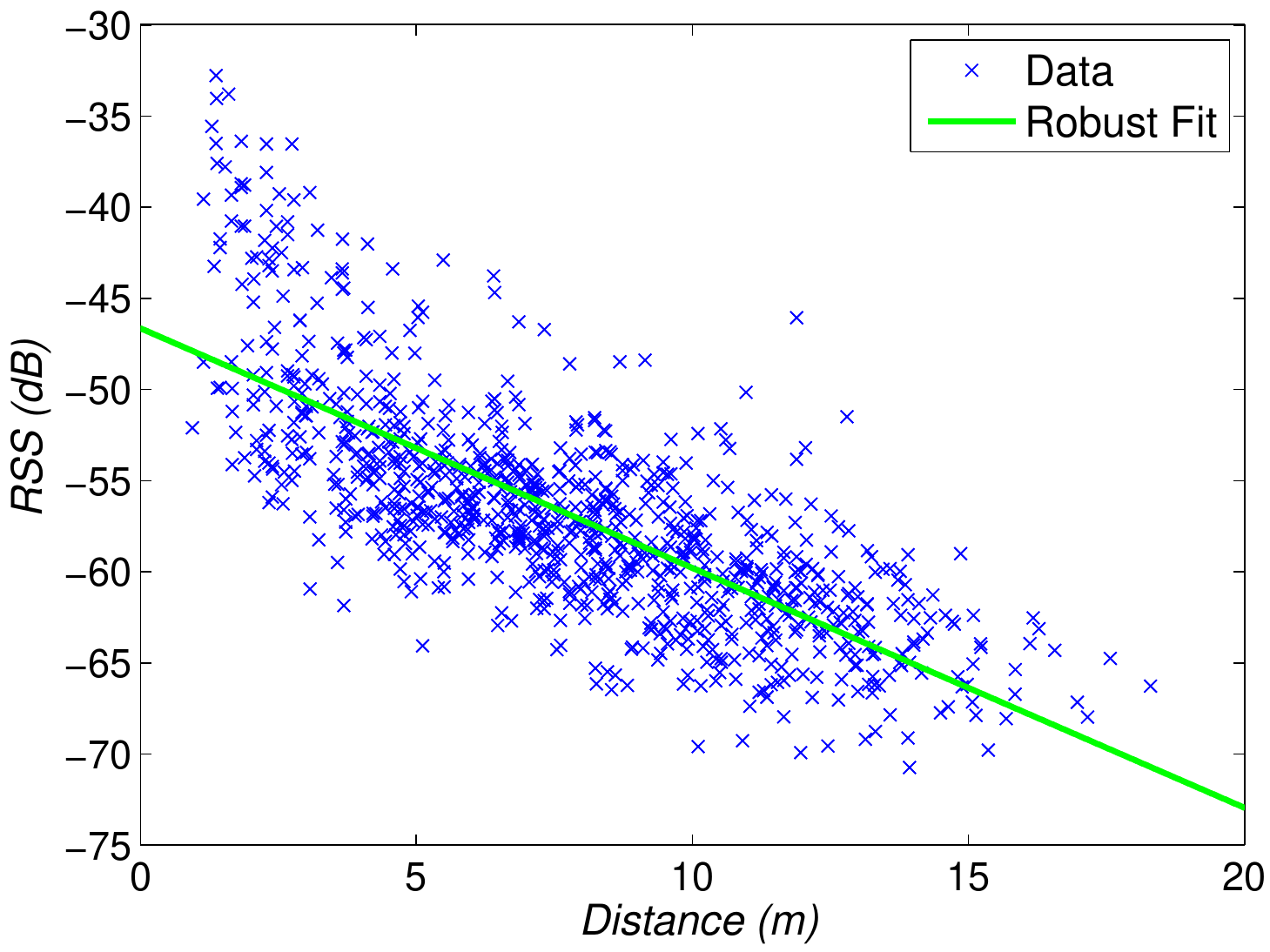}
  \caption{RSS value (from empirical data~\cite{patwari2003}) as a function of distance between two nodes.
A Kolmogorov-Smirov (KS) test was used in\mbox{\cite{patwari2003}} which showed that the Gaussian assumption is valid under a 0.05 significance level.}
  \label{fig:rss_line_fit}
\end{figure}

\subsection{Threat Model}
\label{threatmodel}
We consider an internal and external adversary
 whose main goal is to significantly perturb a target's perceived location by the fusion center $\mathbf{\widehat{\Theta}}$ from its true location  $\mathbf{\Theta}$
 by \textit{manipulating the response time of the target, thus affecting the TOA delay measurements}
received by the anchors
as discussed in our motivating example in Section~\ref{sec:motivatingeg}.
Recall from Section III-C that the delay measurement received at the $i^{th}$ anchor in a non-adversarial environment is given by:
	\begin{equation*}
	t_i = \frac{d(\mathbf{\Theta}, {\bf x}_i)}{v_p} + W_i.
	\end{equation*}
where $W_i$ represents the i.i.d. Gaussian random variable time delay error.
A malicious target or anchor (in the case of an \textit{internal} adversary) can falsify the target's location by adding a delay $\delta_i$ before replying a TWR request message such that the received delay measurement becomes
	\begin{equation}
	t_i = \frac{d(\mathbf{\Theta}, {\bf x}_i)}{v_p} + W_i + \delta_i
  \label{eqn:attackedt}
	\end{equation}
where we assume $\delta_i \stackrel{\text{i.i.d.}}{\sim}  \mathcal{N}(\mu_{\delta}, \sigma^2_{\delta})$.
The malicious target can insert the delay at point A (as shown in Fig.~\ref{fig:twr}) in the TWR protocol while a malicious anchor may insert the delay at point B. The malicious targets may collude to fool the anchors by appearing to be closer or further from them.
This scenario is accounted by the i.i.d. Gaussian noise model in~\eqref{eqn:attackedt}.
A positive attacker delay will fool an anchor into believing that the target is further away from its actual position while a negative delay will make the target appear nearer to the anchor than it really is.
The collusion is assumed to be limited to nearby nodes in the vicinity due to our Gaussian attacker delay model.
The Gaussian model is used for analytical convenience as the adversary may be able to launch 
distance enlargement or distance reduction attacks \cite{pot2010}.

Since the distance estimate computed by an anchor $i$ is equivalent to
	\begin{equation}
	\begin{split}
	\widehat{d}(\mathbf{\Theta}, {\bf x}_i) &= t_i  v_p
= \left( \frac{d(\mathbf{\Theta}, {\bf x}_i)}{v_p} + W + \delta_i \right)  v_p,
	\end{split}
	\end{equation}
where $v_p \gg 0$, depending on the carrier frequency,
a small value of delay $\delta_i$ (e.g., \mbox{$10^{-9}$\,s}) is sufficient to {  result in a large difference in the estimated distance
(approximately 12.5cm, in the case of 2.4 GHz radio waves).
An \textit{external} adversary (who cannot compromise nodes) may also increase the delay measurement by some $\delta_i$, which may not necessarily be non-negative
through  attacking the PHY layer\cite{tapo2014}.

Our goal is to design a detection test that runs at the backend server, independent of the protocols used between the anchors and targets (see Fig.~{\ref{fig:system_model}}) for data communications, authentication, network registration, etc.
This is to maximize compatibility with existing legacy TWR systems.
Therefore, the proposed test is flexible enough to be used in both scenarios with low power low-computational power IoT devices and scenarios with high computational power IoT devices.

\section{ELSA: Enhanced Location Spoofing Detection using Audibility}
We present the \emph{location spoofing} detection test ELSA which utilizes both TOA measurements and the implicitly available \emph{audibility} information to verify that a target is not spoofing its delay measurements.

\subsection{Problem Formulation: Optimal Detection}
In order to achieve this task, a common approach would be to construct a binary hypothesis test to verify the received delay measurements.
The well-known Likelihood Ratio Test (LRT) which is the optimal test (justified by the Neyman-Pearson lemma~\cite{neyman1933, kay1993}) can be used to detect location spoofing attempts under the two competing hypotheses:
	\begin{equation}
	\begin{split}
	&\mathcal{H}_0: \text{no location spoofing}
	\\
	&\mathcal{H}_1: \text{location spoofing attempt}.
	\end{split}
	\end{equation}
The LRT\footnote{The LRT for the conventional non-audibility-aware approach (see Appendix-\ref{Appendix_woaudibility1}) is $
	\Lambda({\bf t})
	\triangleq                                                                                                   \frac{ p({\bf t}| \mathcal{H}_1) }{ p({\bf t} | \mathcal{H}_0) }
	\underset{\mathcal{H}_0}{\overset{\mathcal{H}_1}{\gtrless}}
 	\eta.$ }
 	can be formulated as:
	\begin{equation}
	\Lambda({\bf t}, {\bf r})
	\triangleq                                                                                                   	\frac{ p({\bf t}, {\bf r} | \mathcal{H}_1) }{ p({\bf t}, {\bf r} | \mathcal{H}_0) }
	\underset{\mathcal{H}_0}{\overset{\mathcal{H}_1}{\gtrless}}
 	\eta,
	\label{eqn:lrt}
	\end{equation}
where the bold letters ${\bf t}$ and ${\bf r}$ represent vectors of delay observations ${\bf t} = [t_1, \dots, t_n]$ and audibility indicator values ${\bf r} = [r_1, \dots, r_n]$ from the $n$ anchors respectively, and $\eta$ is a chosen threshold.

Under the Neyman-Pearson lemma,  the LRT is the most powerful test at each significance level $\alpha$ (false alarm) for a threshold $\eta$ where $p(\Lambda({\bf t}, {\bf r}) > \eta | \mathcal{H}_0) = \alpha$.
The functions $p({\bf t}, {\bf r} | \mathcal{H}_0)$ and $p({\bf t}, {\bf r} | \mathcal{H}_1)$ represent the likelihood functions for the null hypothesis and alternative hypothesis respectively.
Since we treat $\mathbf{\Theta}$ as an unknown random variable,
the likelihood functions can be formulated as
	\begin{equation*}
	\begin{split}
	p({\bf t}, {\bf r} | \mathcal{H}_j)
	&= \int p({\bf t}, {\bf r} | \mathbf{\Theta}, \mathcal{H}_j) p(\mathbf{\Theta}|\mathcal{H}_j)  \,d\mathbf{\Theta}
	\\
	&= \int p({\bf t} | {\bf r}, \mathbf{\Theta}, \mathcal{H}_j) p({\bf r} | \mathbf{\Theta}, \mathcal{H}_j) p(\mathbf{\Theta} | \mathcal{H}_j)  \,d\mathbf{\Theta}.
	\end{split}
	\end{equation*}

However, a closed form expression to the above integral is intractable due to the non-linear relationship in $p({\bf t}, {\bf r} | \mathbf{\Theta}, \mathcal{H}_j)$.
Hence, the LRT in \eqref{eqn:lrt} is no longer applicable.
Instead, it is common to use the Generalized Likelihood Ratio Test (GLRT)~\cite{neyman1933, kay1993}), given by
	\begin{equation}
	\Lambda({\bf t}, {\bf r})
	\triangleq                                                                                                   \frac{p( {\bf t}, {\bf r} |\mathcal{H}_1, \widehat{\mathbf{\Theta}}_{\MAP}^{\mathcal{H}_1})}     {p( {\bf t}, {\bf r} |\mathcal{H}_0, \widehat{\mathbf{\Theta}}_{\MAP}^{\mathcal{H}_0})}                                      \underset{\mathcal{H}_0}{\overset{\mathcal{H}_1}{\gtrless}}
 \eta,
	\label{eqn:glrt}
	\end{equation}
where we approximate $p( {\bf t}, {\bf r} |\mathcal{H}_j)$ using the maximum-a-posteriori (MAP) estimate $\widehat{\mathbf{\Theta}}_{\MAP}^{\mathcal{H}_j}$ (see Appendix-B) given by
	\begin{align*}
\argmax_{\mathbf{\Theta}}  &\sum_{i=1}^{n}
	 \log \mathcal{N}(t_i;  \frac{d(\mathbf{\Theta}, {\bf x}_i)}{v_p} +\delta_i, \sigma^2_{W})
 \mathds{1}(r_i = 1)
	\\&
	 \,\,\,\,\,\, + \sum_{i=1}^{n}
\log P(r_i = 1 | \mathbf{\Theta}, \mathcal{H}_j) \mathds{1}(r_i = 1)
	\\&
	 \quad
+ P(r_i = 0 | \mathbf{\Theta}, \mathcal{H}_j) \mathds{1}(r_i = 0)
+
\log p(\mathbf{\Theta} | \mathcal{H}_j).
	\end{align*}

\subsection{Derivation of Test Statistic}
Under the null hypothesis $\mathcal{H}_0$, the likelihood function is simply
	\begin{equation}
	p( {\bf t}, {\bf r}| \mathcal{H}_0, \widehat{\mathbf{\Theta}}_{\MAP}^{\mathcal{H}_0} ) =
p( {\bf t}| {\bf r}, \mathcal{H}_0, \widehat{\mathbf{\Theta}}_{\MAP}^{\mathcal{H}_0})
p( {\bf r}| \mathcal{H}_0, \widehat{\mathbf{\Theta}}_{\MAP}^{\mathcal{H}_0}),
	\label{eqn:nullhypothesis}
	\end{equation}
where
	\begin{equation*}
	\begin{split}
	p( {\bf t}| {\bf r}, \mathcal{H}_0, \widehat{\mathbf{\Theta}}_{\MAP}^{\mathcal{H}_0})
	=
\prod^{n}_{i = 1}
\Big[
 \mathcal{N}(t_i;  \frac{d(\widehat{\mathbf{\Theta}}_{\MAP}^{\mathcal{H}_0}, {\bf x}_i)}{v_p}, \sigma^2_{W})\mathds{1}(r_i = 1)
	+ \mathds{1}(r_i = 0)
\Big],
	\end{split}
	\end{equation*}
and
	\begin{equation*}
	\begin{split}
	p( {\bf r}| \mathcal{H}_0, \widehat{\mathbf{\Theta}}_{\MAP}^{\mathcal{H}_0}) =
  \prod^{n}_{i = 1}
\Big[ P(r_i = 1 | \widehat{\mathbf{\Theta}}_{\MAP}^{\mathcal{H}_0}) \mathds{1}(r_i = 1)
	+ 	P(r_i = 0 | \widehat{\mathbf{\Theta}}_{\MAP}^{\mathcal{H}_0}) \mathds{1}(r_i = 0)
\Big].
	\end{split}
	\end{equation*}

Under the alternative hypothesis $\mathcal{H}_1$,
	\begin{equation}
	p( {\bf t}, {\bf r}| \mathcal{H}_1, \widehat{\mathbf{\Theta}}_{\MAP}^{\mathcal{H}_1} ) =
p( {\bf t}| {\bf r}, \mathcal{H}_1, \widehat{\mathbf{\Theta}}_{\MAP}^{\mathcal{H}_1})
p( {\bf r}| \mathcal{H}_1, \widehat{\mathbf{\Theta}}_{\MAP}^{\mathcal{H}_1}),
	\label{eqn:althypothesis}
	\end{equation}
where
	\begin{equation*}
	\begin{split}
	p( {\bf t}| {\bf r}, \mathcal{H}_1, \widehat{\mathbf{\Theta}}_{\MAP}^{\mathcal{H}_1}) 	 =
   \prod^{n}_{i = 1}
\Big[
\mathcal{N}(t_i;  \frac{d(\widehat{\mathbf{\Theta}}_{\MAP}^{\mathcal{H}_1}, {\bf x}_i)}{v_p} + \mu_{\delta}, \sigma^2_{W} + \sigma^2_{\delta})\mathds{1}(r_i = 1) + \mathds{1}(r_i = 0)
\Big],
	\end{split}
	\end{equation*}
and
	\begin{equation*}
	\begin{split}
	p( {\bf r}| \mathcal{H}_1, \widehat{\mathbf{\Theta}}_{\MAP}^{\mathcal{H}_1})  =
 \prod^{n}_{i = 1}  \left[ P(r_i = 1 | \widehat{\mathbf{\Theta}}_{\MAP}^{\mathcal{H}_1}) \mathds{1}(r_i = 1)
	+ 	P(r_i = 0 | \widehat{\mathbf{\Theta}}_{\MAP}^{\mathcal{H}_1}) \mathds{1}(r_i = 0) \right].
	\end{split}
	\end{equation*}

Substitution of the values obtain from \eqref{eqn:nullhypothesis} and \eqref{eqn:althypothesis} into \eqref{eqn:glrt} will give the test statistic in \eqref{eqn:glrtaudibiltiy}.
\begin{figure*}
	\begin{equation}
	\begin{split}
		 &\Lambda({\bf t}, {\bf r}) =
  \prod^{n}_{i = 1}
\left[
\mathcal{N}(t_i;  \frac{d(\widehat{\mathbf{\Theta}}_{\MAP}^{\mathcal{H}_1}, {\bf x}_i)}{v_p} + \mu_{\delta}, \sigma^2_{W} + \sigma^2_{\delta})\mathds{1}(r_i = 1) + \mathds{1}(r_i = 0)
\right]
	\\& \times
\prod^{n}_{i = 1}  \left[ P(r_i = 1 | \widehat{\mathbf{\Theta}}_{\MAP}^{\mathcal{H}_1}) \mathds{1}(r_i = 1)
	+ 	P(r_i = 0 | \widehat{\mathbf{\Theta}}_{\MAP}^{\mathcal{H}_1}) \mathds{1}(r_i = 0) \right]
	/
    \left[
\prod^{n}_{i = 1} \mathcal{N}(t_i;  \frac{d(\widehat{\mathbf{\Theta}}_{\MAP}^{\mathcal{H}_0}, {\bf x}_i)}{v_p}, \sigma^2_{W})\mathds{1}(r_i = 1) + \mathds{1}(r_i = 0)
\right]
	\\& \times
 \left[ P(r_i = 1 | \widehat{\mathbf{\Theta}}_{\MAP}^{\mathcal{H}_0}) \mathds{1}(r_i = 1)
	+ 	P(r_i = 0 | \widehat{\mathbf{\Theta}}_{\MAP}^{\mathcal{H}_0}) \mathds{1}(r_i = 0) \right]
	\underset{\mathcal{H}_0}{\overset{\mathcal{H}_1}{\gtrless}}
 	\eta.
	\label{eqn:glrtaudibiltiy}
	\end{split}
	\end{equation}
	\end{figure*}
Algorithm 1 summarizes the steps in the proposed ELSA.
\begin{algorithm}
    \SetKwInOut{Input}{Input}
    \SetKwInOut{Output}{Output}
    \underline{function ELSA$(t_{1,\ldots,n}, x_{1,\ldots,n}, \eta, v_p, d_0, P_t, \lambda, \mu_{\delta}, \alpha, \sigma^2_W, \sigma^2_{\epsilon},  \sigma^2_{\delta} )$}\;
    \Input{Delay measurements received from the target $t_{1,\ldots,n}$, positions of the anchors $x_{1,\ldots,n}$, threshold $\eta$, and the system parameters.}
    \Output{Binary result of hypothesis test.}
    Compute MAP estimate for $\mathcal{H}_0$ (no location spoofing), $\widehat{\mathbf{\Theta}}_{\MAP}^{\mathcal{H}_0}$ via \eqref{eqn:mapestimate} (with $\delta_i = 0$).
\\
    Compute MAP estimate for
	$\mathcal{H}_1$ (location spoofing attempt), $\widehat{\mathbf{\Theta}}_{\MAP}^{\mathcal{H}_1}$ via \eqref{eqn:mapestimate} (with $\delta_i \neq 0$).
\\
	Compute likelihood probabilities for the two MAP estimates, $p( {\bf t}, {\bf r} |\mathcal{H}_0, \widehat{\mathbf{\Theta}}_{\MAP}^{\mathcal{H}_0} )$ and
$p( {\bf t}, {\bf r} |\mathcal{H}_1, \widehat{\mathbf{\Theta}}_{\MAP}^{\mathcal{H}_1} )$
 via \eqref{eqn:nullhypothesis} and \eqref{eqn:althypothesis} respectively.
\\
	Compute the decision rule $\Lambda({\bf t}, {\bf r}) = \frac{p( {\bf t}, {\bf r} |\mathcal{H}_1, \widehat{\mathbf{\Theta}}_{\MAP}^{\mathcal{H}_1})}     {p( {\bf t}, {\bf r} |\mathcal{H}_0, \widehat{\mathbf{\Theta}}_{\MAP}^{\mathcal{H}_0})}                                     $ via \eqref{eqn:glrtaudibiltiy}.
\\
	Reject $\mathcal{H}_0$ (no location spoofing) if $\Lambda({\bf t}, {\bf r}) > \eta$. Otherwise, accept $\mathcal{H}_1$  (location spoofing detected).
    \caption{ELSA algorithm for detecting location spoofing attempts.}
\end{algorithm}

Next, we prove using the following theorem that ELSA provides better detection rates than the conventional non-audibility-aware GLRT for the same false alarm rate tradeoff.

\begin{theorem}
For a fixed false alarm rate, the proposed audibility-aware GLRT will have a detection rate $P_d^{\rm A}$ that is higher than the conventional GLRT  $P_d^{\rm NA}$ which does not take into account audibility. i.e.,
	\begin{equation*}
	P_d^{\rm A}  \geq P_d^{\rm NA}.
	\end{equation*}
\end{theorem}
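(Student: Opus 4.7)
\medskip

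\noindent\textbf{Proof Proposal.}
My plan is to prove the inequality by a data-augmentation (or ``richer $\sigma$-algebra'') argument: the audibility-aware test has access to strictly more information than the non-audibility-aware test, namely the vector $\mathbf{r}$ in addition to $\mathbf{t}$, and a test based on a sufficient-or-finer observation can never be inferior in the Neyman--Pearson sense. The challenge is that the statement is formulated for GLRTs rather than LRTs, so I will split the argument into (i) an envelope statement about the best achievable detection rate over any test using a given set of observations, and (ii) an argument that the GLRT in \eqref{eqn:glrtaudibiltiy} attains, or at least is not outperformed by, the LRT benchmark in its own class of tests.

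First, I would fix a false alarm level $\alpha$ and define, for each observation set, the class of (possibly randomized) decision rules with $P(\text{reject } \mathcal{H}_0 \mid \mathcal{H}_0) \leq \alpha$. For any test $\phi^{\rm NA}(\mathbf{t})$ in the non-audibility-aware class, I would construct the lifted test $\tilde{\phi}(\mathbf{t},\mathbf{r}) := \phi^{\rm NA}(\mathbf{t})$, which uses $(\mathbf{t},\mathbf{r})$ but discards $\mathbf{r}$. By direct computation of its error probabilities under $\mathcal{H}_0$ and $\mathcal{H}_1$, $\tilde{\phi}$ has the same false alarm rate $\alpha$ and the same detection rate as $\phi^{\rm NA}$. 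Hence the non-audibility-aware class embeds into the audibility-aware class, and therefore
\[
\sup_{\phi^{\rm A}:\; P_{\rm FA}\leq\alpha} P_d[\phi^{\rm A}] \;\geq\; \sup_{\phi^{\rm NA}:\; P_{\rm FA}\leq\alpha} P_d[\phi^{\rm NA}].
\]
By the Neyman--Pearson lemma applied to the (profiled) likelihoods $p(\mathbf{t},\mathbf{r}\mid\mathcal{H}_j,\widehat{\mathbf{\Theta}}_{\MAP}^{\mathcal{H}_j})$ on the left and $p(\mathbf{t}\mid\mathcal{H}_j,\widehat{\mathbf{\Theta}}_{\MAP}^{\mathcal{H}_j})$ on the right, the respective GLRTs attain these suprema within their classes, so $P_d^{\rm A} \geq P_d^{\rm NA}$.

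The step I expect to be the main obstacle is justifying that the Neyman--Pearson optimality transfers cleanly from the simple-hypothesis LRT to the composite-hypothesis GLRT after the MAP plug-in for $\mathbf{\Theta}$. I would handle this either by appealing to the standard result that the GLRT is uniformly most powerful among scale-invariant tests when such invariance is present, or more concretely by arguing that once $\widehat{\mathbf{\Theta}}_{\MAP}^{\mathcal{H}_j}$ is substituted the effective hypotheses become simple, so the embedding argument above applies verbatim to the resulting simple-vs-simple comparison. A secondary subtlety is that the MAP estimate itself differs between the two tests: the audibility-aware MAP uses both $\mathbf{t}$ and $\mathbf{r}$ and is therefore a better estimate of $\mathbf{\Theta}$, which only strengthens the conclusion but must be checked to not accidentally inflate the false alarm rate. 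This is resolved by noting that the threshold $\eta$ is calibrated to achieve exactly $\alpha$ under $\mathcal{H}_0$ in each case, so the comparison is apples-to-apples by construction. Concluding, the combination of the embedding lemma and the Neyman--Pearson optimality on the joint observations yields the theorem.
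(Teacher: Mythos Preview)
Your embedding argument is correct as far as it goes: any test based on $\mathbf{t}$ alone can be lifted to a test based on $(\mathbf{t},\mathbf{r})$ with identical operating point, so the \emph{optimal} level-$\alpha$ test over the richer observation dominates the optimal one over the poorer observation. The gap is exactly where you flag it, and your proposed fixes do not close it. The Neyman--Pearson lemma does not assert that a GLRT attains the supremum over all level-$\alpha$ tests; it guarantees optimality only for a simple-vs-simple likelihood ratio with \emph{fixed} densities. Once you plug in $\widehat{\mathbf{\Theta}}_{\MAP}^{\mathcal{H}_j}$, the ``simple'' densities become data-dependent, so the NP argument no longer applies, and your chain $P_d^{\rm A}=\sup_{\phi^{\rm A}}\ge\sup_{\phi^{\rm NA}}=P_d^{\rm NA}$ breaks at both equalities. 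There is also no scale-invariance structure here that would deliver a UMP-invariant result. In short, your argument proves a statement about optimal tests, not about the two specific GLRTs being compared.

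The paper takes a completely different, model-specific route. It writes out both log test statistics explicitly and shows that, after rearrangement, the audibility-aware statistic equals the non-audibility-aware statistic $Z$ plus an additive term $\Xi$ built from the audibility probabilities $\Phi(\cdot)$. The heart of the proof is then the elementary inequality $\Xi\le 0$, obtained term-by-term from the monotonicity of $\Phi$ and $\log$ together with the modeling assumption $\mu_\delta>0$: for audible anchors the numerator $1-\Phi(\cdot+\mu')$ is smaller than the denominator $1-\Phi(\cdot)$, and for inaudible anchors $\Phi(\cdot-\mu')<\Phi(\cdot)$. Because under $\mathcal{H}_0$ one has $\mu_\delta=0$ and hence $\Xi=0$, the false alarm rates coincide, while under $\mathcal{H}_1$ the shift $\Xi\le 0$ pushes the detection probability up via the complementary cdf. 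This is not an information-theoretic dominance argument at all; it is a direct computation that exploits the particular log-normal audibility model and the sign of $\mu_\delta$. Your abstract approach cannot recover this because the conclusion is not a general fact about GLRTs with extra data---it hinges on the specific structure of \eqref{eqn:glrtaudibiltiy}.
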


\begin{proof}
See Appendix-{\ref{Appendix_theorem}}.
\end{proof}

\section{Experimental Results and Discussion}
In this section, we evaluate the performance of our proposed detection test ELSA against the conventional (labeled as `original' in the figures) GLRT which does not take into account audibility (similar to the work in \cite{yan20142}) in terms of the location spoofing detection performance.
Both simulations and data from a real-world dataset (available in \cite{datasetweb}) were used in our evaluation.
The MATLAB code used to obtain the simulation results is available as supplemental material, which can be found at \cite{ourmatlabcode}.
Unless otherwise stated, the parameters in Table~\ref{table:simpara} were used in our simulations.

\begin{table}
\center
\caption{Simulation Parameters}
\label{table:simpara}
\vspace{0.5em}
\begin{tabular}{ll}
\toprule
Parameter              								& Value (equivalent distance)  		       \\
\midrule
TOA noise, $\sigma_W$       					& \mbox{$10^{-8}$\,s}  		 \quad\quad (\SI{3}{m})  \\
RSS noise, $\sigma_{\epsilon}$    			& \mbox{$\sqrt{10}$\,dBm} 		           \\
Attacker's delay mean, $\mu_{\delta}$        		& \SI{4e-8}{s} (\SI{12}{m})   \\
Attacker's delay s.d., $\sigma_{\delta}$            		& \SI{4e-8}{s} (\SI{12}{m})     \\
$^*$Only positive attacker delays $|\delta_i|$ were used. & (See Equation~\eqref{eqn:attackedt}) \\
Path loss exponent, $\alpha$            		& 3.2         				\\
Transmit power, $P_t$ at $d_0 = $ \SI{1}{m}		& $-$ \SI{40}{dBm}   				\\
Signal receiving threshold, $\lambda$		& $-$ \SI{102}{dBm} 				 \\
\bottomrule
\end{tabular}
\end{table}

\subsection{Simulation Results for Synthetic Data}
\label{sec:sim_synthetic}
First, we study the effects of utilizing the audibility information using simulations.
We consider the scenario where there exist three anchors at the corners of a \SI{100}{m} $\times$ \SI{100}{m} area as shown in Fig.~\ref{fig:toatoyexample} and the target is selected uniformly at random inside this area (hence, $p(\mathbf{\Theta}) = \frac{1}{100} \times \frac{1}{100}$). We used a grid search with a one meter granularity to search for the optimal target location using the MAP approach (see \eqref{eqn:mapestimate}).
A finer granularity would improve the accuracy of the schemes, but the improvement would not be significant.
Under an adversarial environment, the received delay measurements are adjusted accordingly as discussed in our threat model in Section~\ref{threatmodel}.

\begin{figure}[t]
\centering
\includegraphics[width=3in]{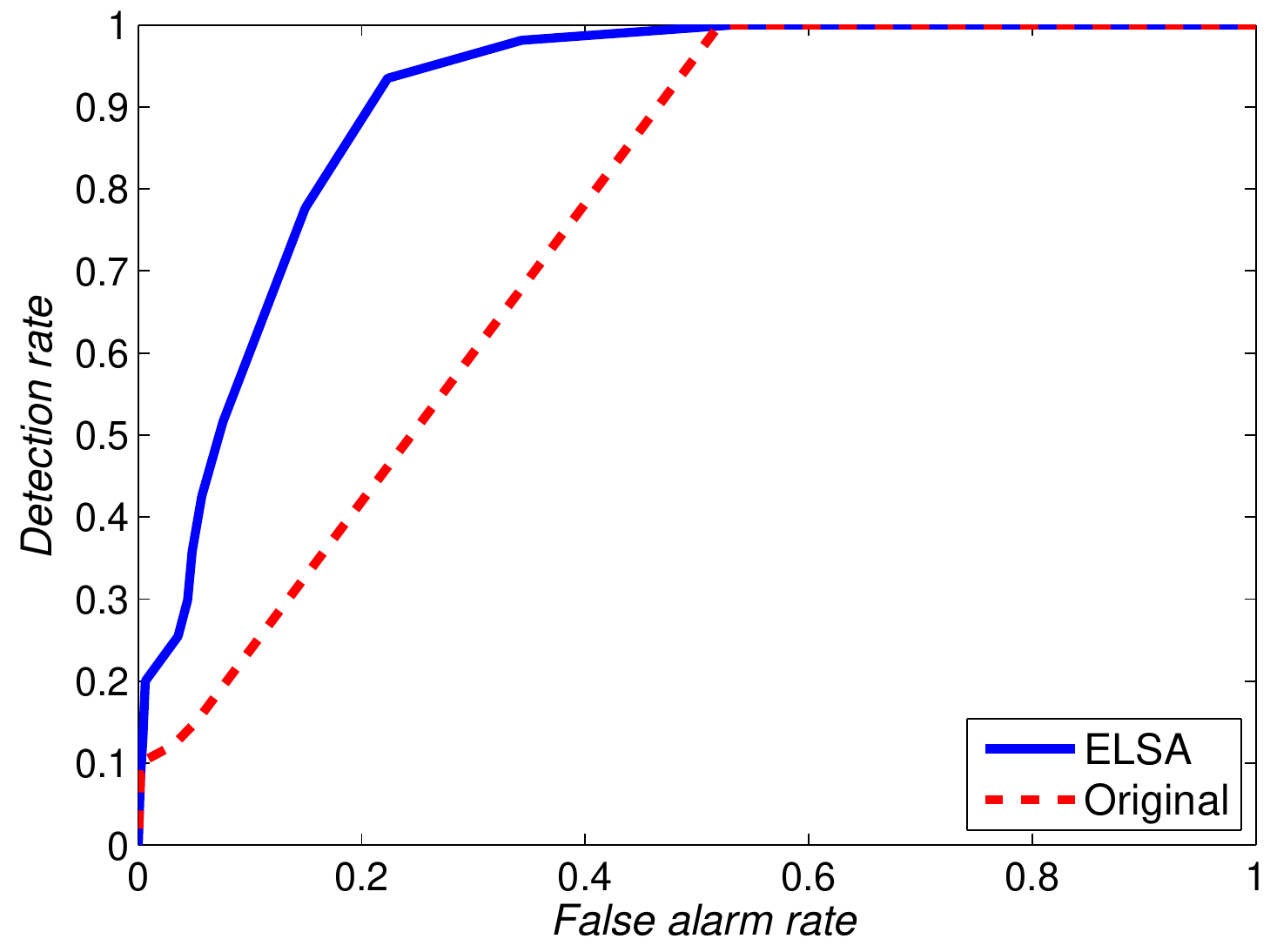}
\caption{ROC curves for 3 anchors (of which 2 are audible).}
\label{fig:toa_roc}
\end{figure}

\subsubsection{\textit{ROC Curve Performance}}
We use the Receiver Operating Characteristic (ROC) curve to compare the detection and false alarm performances of ELSA, against the conventional non-audibility-aware GLRT.
For a given decision rule $\eta$, the detection rate is given by
	\begin{equation*}
	P( \Lambda({\bf t}, {\bf r}) >\eta  |  \mathcal{H}_1  ),
	\end{equation*}
and the false alarm rate is given by
	\begin{equation*}
	P( \Lambda({\bf t}, {\bf r}) >\eta  |  \mathcal{H}_0  ).
	\end{equation*}

In Fig.~\ref{fig:toa_roc}, we plot the ROC curves for scenarios when an attacker adds a positive delay to the delay measurements received by the anchors and the target is on the range of exactly two audible anchors.
The ROC curve for ELSA indicates a significantly better detection performance which demonstrates the superiority of our approach.
Despite a slight model mismatch, an attacker who only adds positive delays does not significantly degrade the detection rate of ELSA.
The detection performance of the conventional approach however, is lower than ELSA's as it is difficult to detect the attack without making use of additional information from the third anchor. Despite not receiving any observations from the third anchor, this piece of valuable information itself is exploited by ELSA whereas the conventional approach simply ignores this.
As it is unlikely that the attacker is able to reduce the propagation delay of a radio wave signal, we only used a positive attacker delay (considered by most works in the literature~\cite{ misra2009, chiang2012, basilico2014}) in our comparisons.

\begin{figure}[t]
\centering
\includegraphics[width=3in]{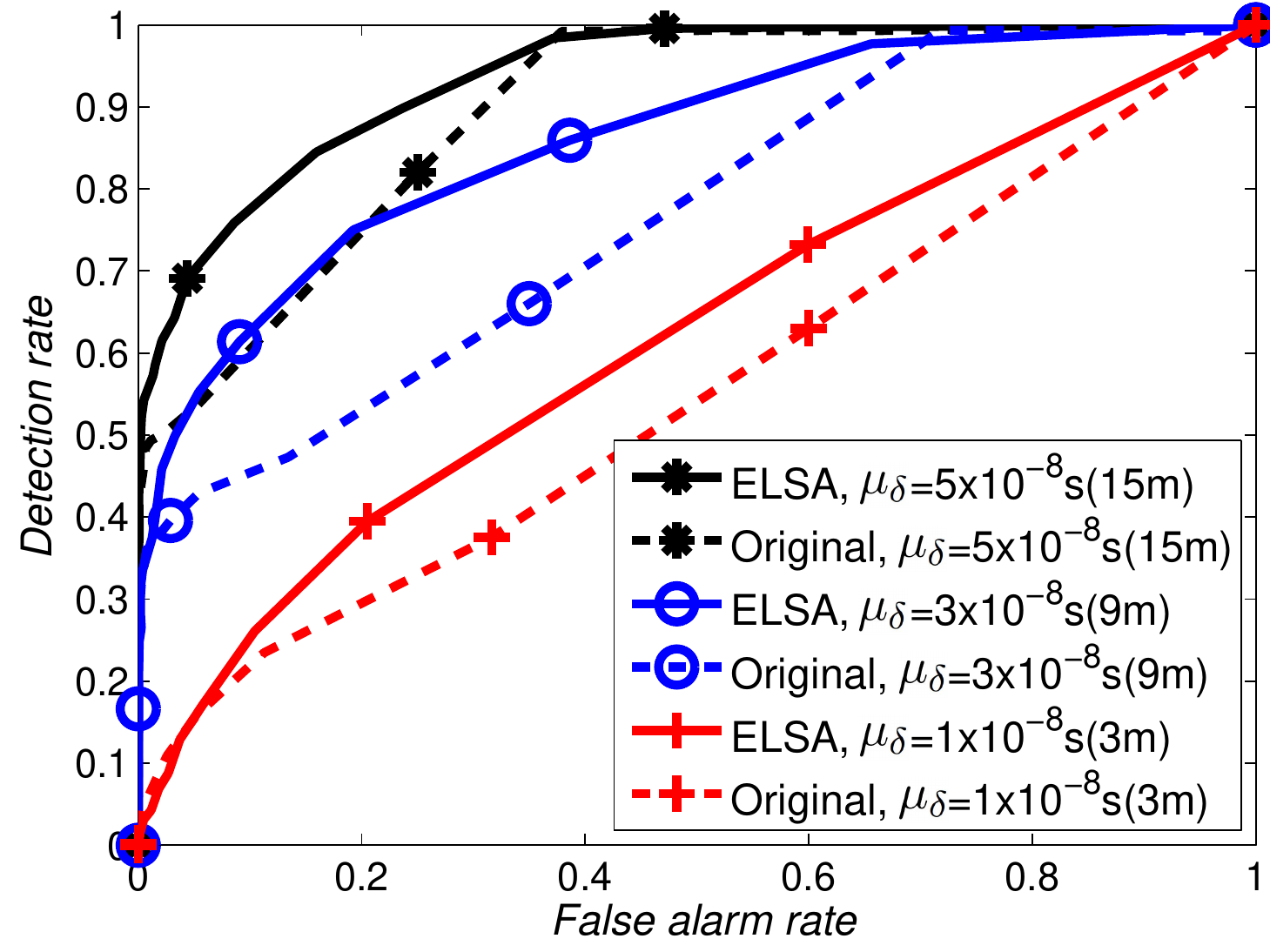}
\caption{ROC curves for different attack mean $\mu_{\delta}$ with three anchors.}
\label{fig:toa_varymean_roc}
\end{figure}

\begin{figure}[t]
\centering
\includegraphics[width=3in]{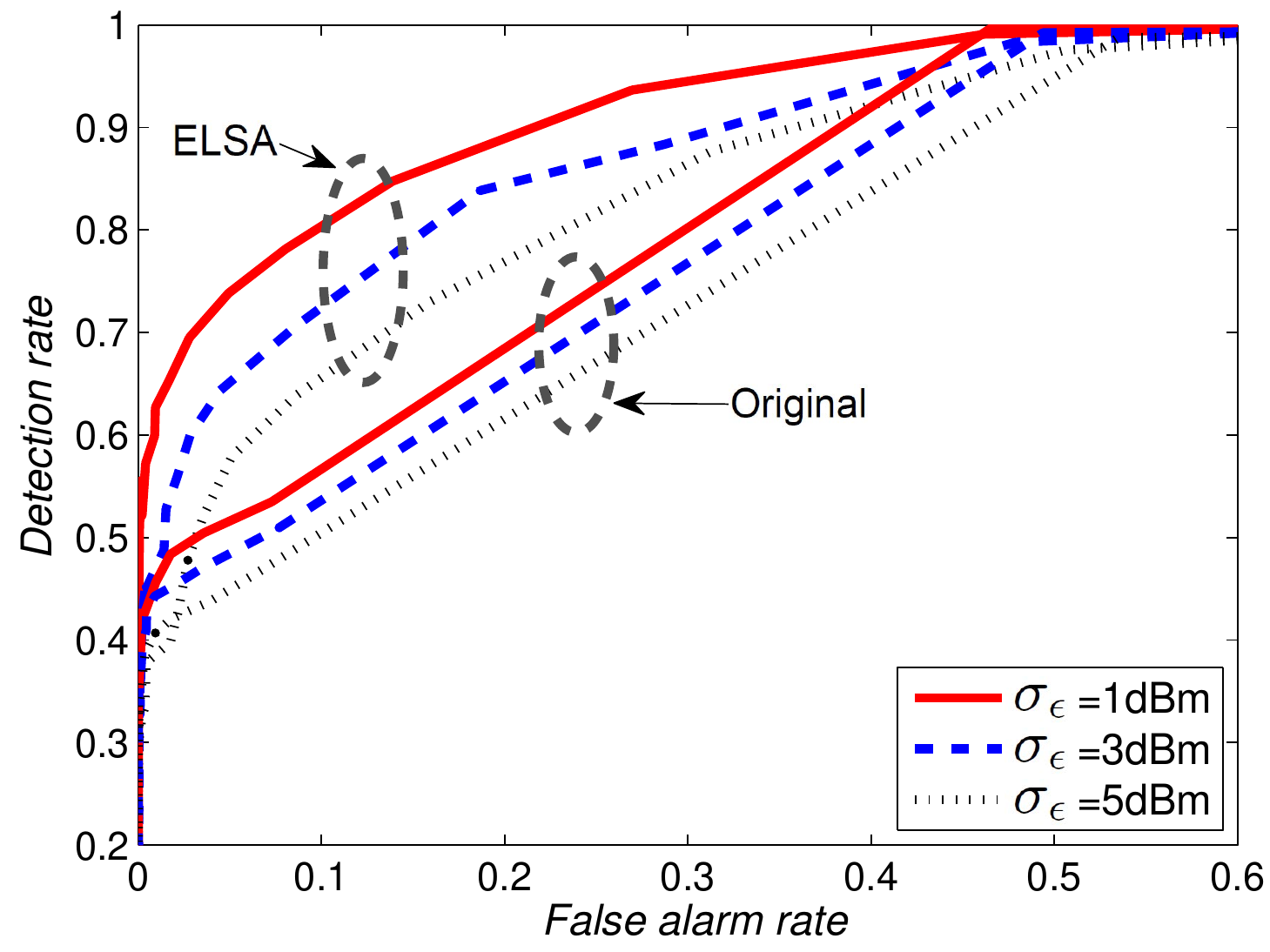}
\caption{ROC curves for different RSS noise variance $\sigma_\epsilon^2$ with  three anchors.}
\label{fig:toa_varyrss_roc}
\end{figure}

\begin{figure}[t]
\centering
\includegraphics[width=3in]{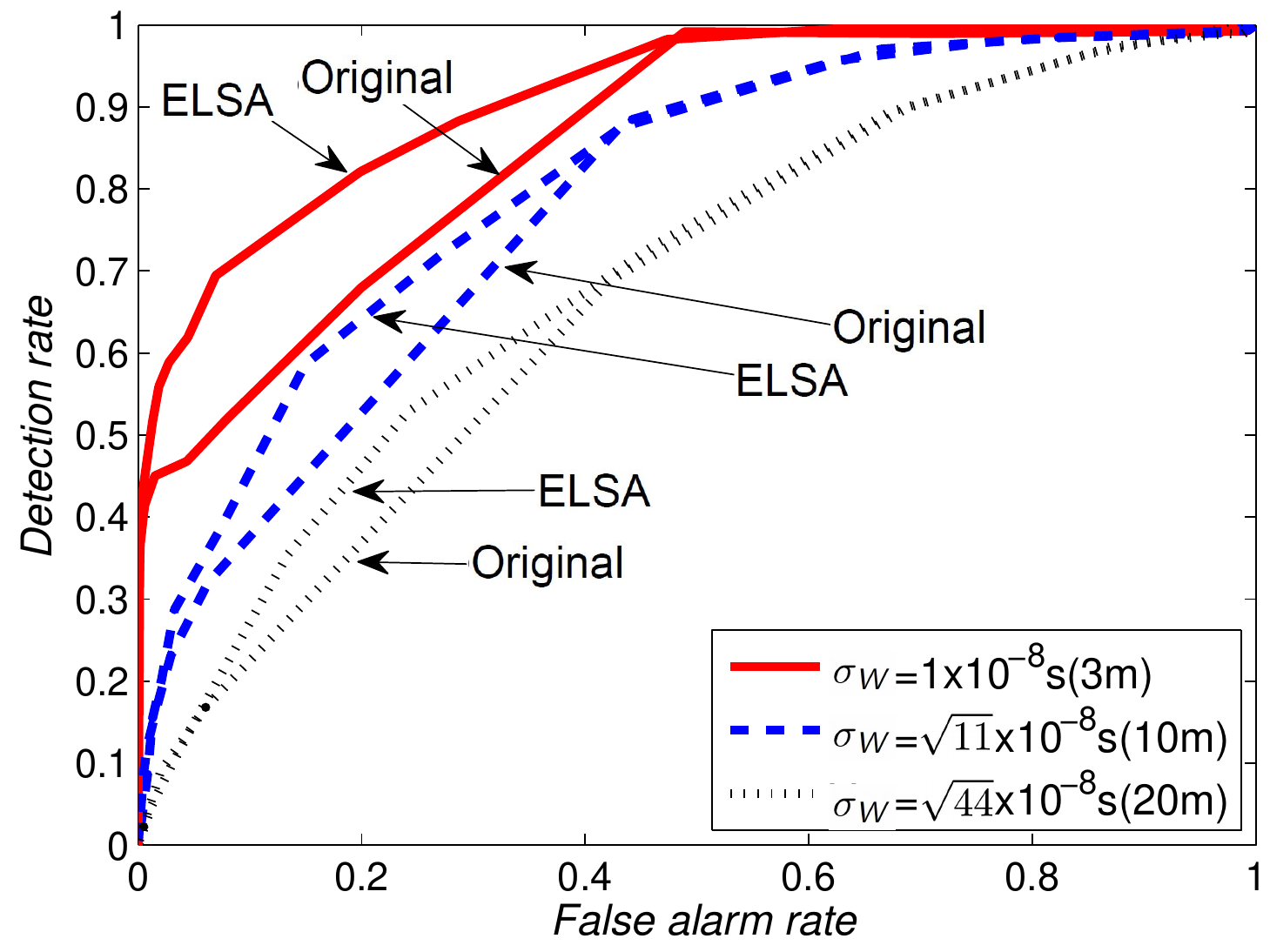}
\caption{ROC curves for different TOA noise variance $\sigma_W^2$ with three anchors.}
\label{fig:toa_varytoa_roc_varyloc}
\end{figure}

\begin{figure}[t]
\centering
\includegraphics[width=3in]{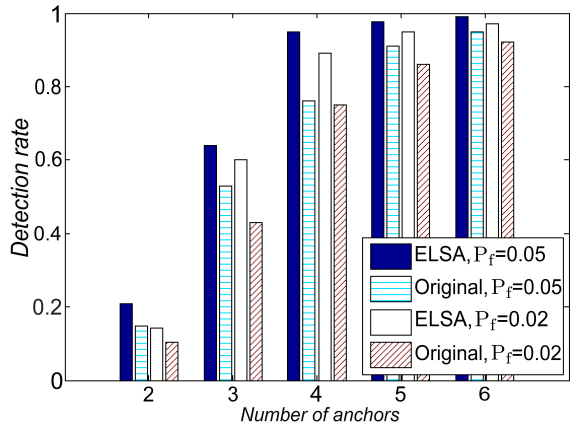}
\caption{Detection rates for different number of anchors (synthetic data) with fixed false alarm rates (${\rm P_ f}$) of 0.02 and 0.05.}
\label{fig:toa_roc_diff_anchors}
\end{figure}

\subsubsection{\textit{ROC Curve Performance under Different Conditions}}
Next, we evaluate the performance of the GLRT tests for different $\mu_{\delta}, \sigma_\epsilon, \sigma_W$ parameters and randomize the target locations for each iteration.
The chosen signal receiving threshold  $\lambda$ includes different inaudible scenarios depending on the target location.
In Fig.~\ref{fig:toa_varymean_roc}, we plot the ROC curves for different attacker delay mean $\mu_{\delta}$ values.
A higher $\mu_{\delta}$ value will perturb the delay measurements further and increase the spoofed distance of the target at the expense of increased detection rate by the GLRT.
Similar to Fig.~\ref{fig:toa_roc}, the detection performance of the conventional GLRT is worse than ELSA's. As \mbox{$\mu_{\delta} >$ \SI{5e-8}{s}} (\SI{15}{m} approx. - take the delay and multiply it with $v_p$), the detection rate for ELSA goes nearer to 100\% and thus we do not plot further.

The impact of obstacles and multipaths can affect the detection performances of the proposed test by increasing the TOA observation noise variances~\cite{patwari2003}. Similarly, the RSS variances will also increase due to the shadowing and multipath.
In Figs.~{\ref{fig:toa_varyrss_roc}} and {\ref{fig:toa_varytoa_roc_varyloc}}, we vary the RSS noise variance $\sigma^2_{\epsilon}$ and TOA noise variance $\sigma^2_{W}$ respectively to verify that the proposed ELSA can still function correctly under large noise variances.
In Fig.~\ref{fig:toa_varyrss_roc}, we vary the RSS noise variance $\sigma^2_{\epsilon}$ and verify that the proposed ELSA can still function correctly under large noise variances.
Note that the performance of the conventional non-audibility aware GLRT is largely unaffected by the RSS noise variance.
However, the performance of ELSA depends more heavily on the RSS readings which affects the audibility information.
In Fig.~\ref{fig:toa_varytoa_roc_varyloc}, we vary the TOA noise variance $\sigma^2_{W}$.
The detection rates for both tests drops as $\sigma^2_{W}$ increases because the attacker's delay is covered by in the TOA observation noise.
Hence, the impact of the attack also drops when the $\sigma^2_{W}$ is high.
Next, we increase the number of deployed anchors and plot the detection performance in
Fig.~\ref{fig:toa_roc_diff_anchors} for fixed false alarm rates. 
We placed an anchor at each corner of the \SI{100}{m} $\times$ \SI{100}{m} area and another two anchors in the middle.
Similarly, the detection rate of the conventional approach is less than the proposed ELSA's as it does not account for audibility.
However, the detection performances for both tests will improve with diminishing returns as the number of anchors increases.

In the event where a malicious node colludes with another node to create a fake audibility condition, the malicious node may either appear to be closer or further to some anchors.
However, our existing threat model which accounts for an i.i.d. adversarial delay (see (5) of Section II-D (Threat Model)) will be able to detect the location spoofing attempt due to the inconsistency in the TOA measurements and RSS readings.
For jamming scenarios,  an adversary may be able to fool the detection test into having a false alarm but he is still unable to successfully spoof his location which is the main goal of the location spoofing detection test.
However, with the emergence of the Ultra Wide Band (UWB) technology, the threat of jamming attacks have been reduced. UWB IoT chip markers (e.g.,~\cite{decawaveweb}) have even claimed that their devices are immune to multipath interference.

\subsection{Results from Real-world Dataset}
We adopt a real sensor network TOA and RSS measurements dataset used in Patwari \textit{et al.}'s works \cite{patwari2003, patwari2007} to validate our proposed audibility framework. The considered network consisted of 44  sensor nodes distributed in an office area in Motorola Labs' Florida Communications Research Lab, in Plantation, FL. Both TOA and RSS measurements were recorded between each sensor node and a high SNR was maintained throughout the experiment to ensure the reliability of the recorded data.
Additional implementation details can be found in the paper \cite{patwari2003} and the dataset is available from the author's website \cite{datasetweb}.
We set the minimum signal receiving threshold $\lambda$ to add inaudible scenarios and evaluated the performances of ELSA and the conventional approach under different scenarios.
We use three of the anchors (node numbers 10, 35, 44) as used by the original authors and an attacker mean of \mbox{$\mu_{\delta} =$ \SI{1.5e-8}{s}} (\SI{4.5}{m} approx.). The anchors are located at the corners of the testbed.

\begin{figure}[t]
\centering
\includegraphics[width=3in]{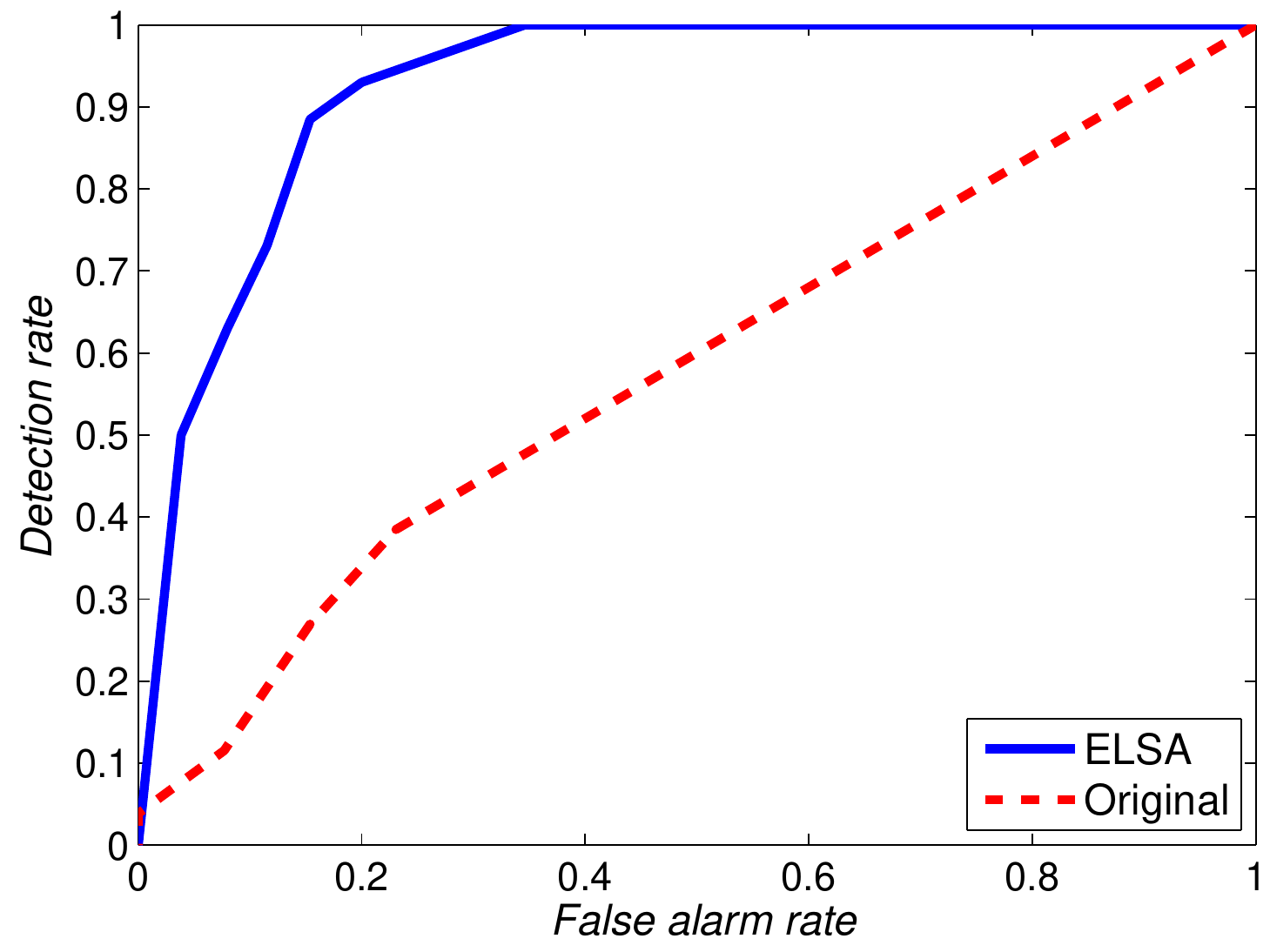}
\caption{ROC curves with $\lambda =$ \SI{-61}{dBm} and 41 different target locations (real-world dataset) and three anchors. }
\label{fig:toa_verify_roc}
\end{figure}

\begin{figure}[t]
\centering
\includegraphics[width=3in]{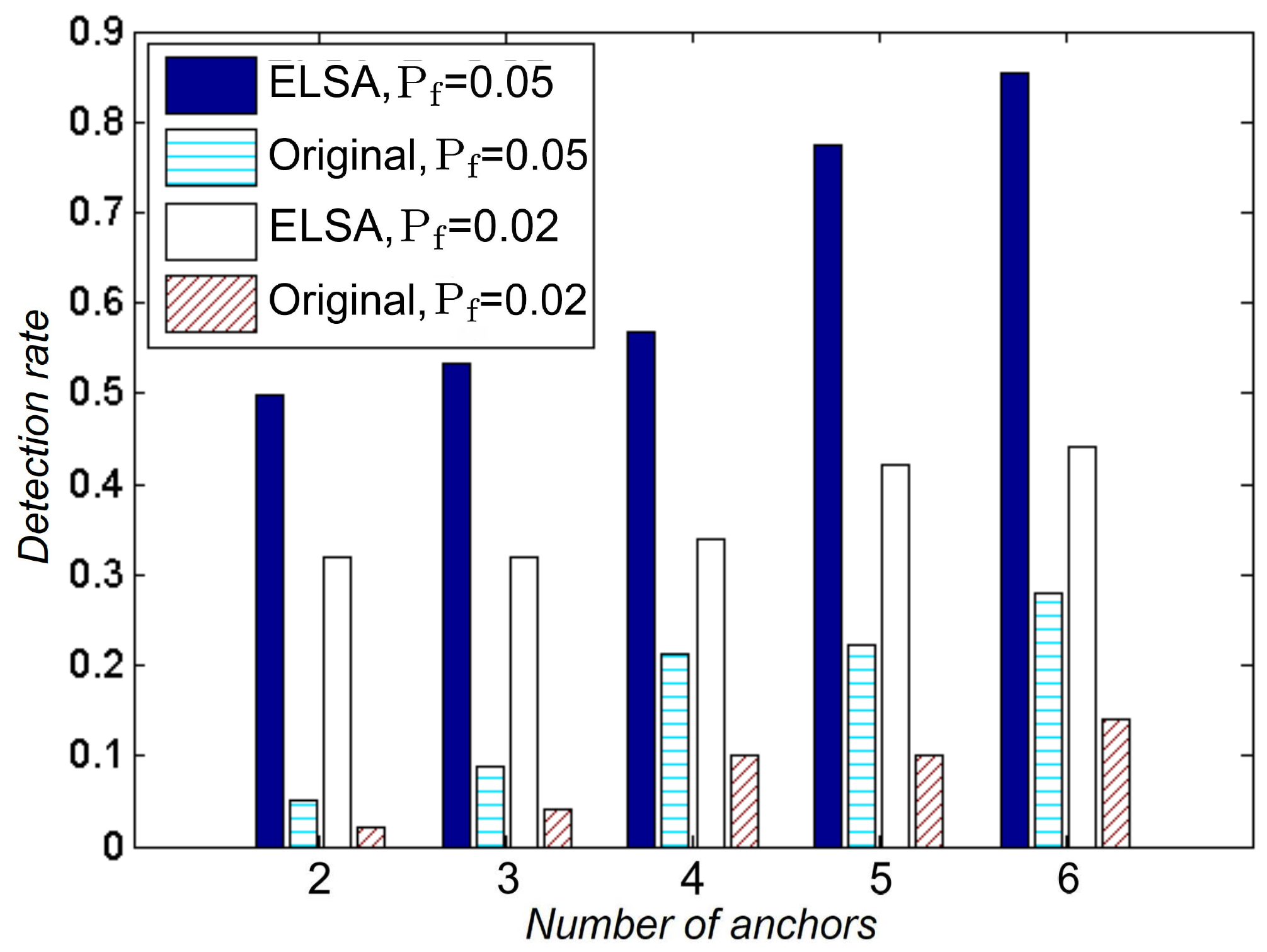}
\caption{Detection rates for different number of anchors (real-world dataset) with $\lambda =$ \SI{-61}{dBm} for false alarm rates (${\rm P_ f}$) of 0.02 and 0.05.}
\label{fig:toa_verify_diff_anchors}
\end{figure}

\textit{ROC Curve Performance:}
In Fig.~\ref{fig:toa_verify_roc}, we plot the ROC curves for $\lambda = -$\SI{61}{dBm}. The chosen scenario includes a good mix of different numbers of audible anchors and highlights the superiority of ELSA compared to the conventional GLRT.
For a fixed false alarm rate, ELSA has a significantly higher detection rate. 
The ROC curve for the conventional GLRT however, is closer to the diagonal line (not drawn) at low false alarm rates which indicates its poorer detection rate trade-off. 
A higher $\mu_{\delta}$ parameter will lead to a steeper ROC curve for both schemes with the proposed scheme still being superior.
In Fig.~\ref{fig:toa_verify_diff_anchors}, we vary the number of deployed anchors and plot the detection rates of the tests for a fixed false alarm rate.
Note that the relative detection improvements of ELSA is significantly better that the relative detection improvements taken from the synthetic results in Fig.~\ref{fig:toa_roc_diff_anchors}. 
This could be due to the limited target locations and their clustered distribution in the dataset whereas in our simulation, we uniformly picked the location of each target in each iteration.

\section{Conclusion}
A new audibility-based framework has been introduced in this paper for detecting location spoofing attempts.
We showed how the conventional TOA-based method may not be able to detect location spoofing attempts especially during inaudible scenarios and developed an audibility-aware detection test called ELSA to do so.
ELSA is able to overcome outage scenarios by exploiting their implicit audibility information.
In addition, we have also demonstrated that ELSA has a better detection performance compared to the conventional GLRT using experimental results from both simulations and a real-world data set.
ELSA also accommodates usage of low-cost IoT devices and lessens the need to deploy a dense network of anchors.
A future research direction would be to investigate other deployment environment-specific TOA, RSS-based or even energy harvesting models to further improve existing detection performances.

\appendix

\label{Appendix_woaudibility1}
\subsection{GLRT Test Statistic without Audibility Considerations}
Consider the case where only $l$ out of the $n$ deployed anchors receive a delay measurement from the target.
Under the null hypothesis $\mathcal{H}_0$, the likelihood function is simply
	\begin{equation}
	p( {\bf t}| \mathcal{H}_0, \widehat{\mathbf{\Theta}}_{\MAP}^{\mathcal{H}_0} )	= \prod^{l}_{i = 1} \mathcal{N}(t_i;  \frac{d(\widehat{\mathbf{\Theta}}_{\MAP}^{\mathcal{H}_0}, {\bf x}_i)}{v_p}, \sigma^2_{W}).
	\end{equation}
Under the alternative hypothesis $\mathcal{H}_1$, the likelihood function is given by
	\begin{equation}
	p( {\bf t}| \mathcal{H}_1, \widehat{\mathbf{\Theta}}_{\MAP}^{\mathcal{H}_1} ) = \prod^{n}_{i = 1}
\mathcal{N}(t_i;  \frac{d(\widehat{\mathbf{\Theta}}_{\MAP}^{\mathcal{H}_1}, {\bf x}_i)}{v_p} + \mu_{\delta}, \sigma^2_{W} + \sigma^2_{\delta}).
	\end{equation}
We obtain the test statistic
	\begin{align*}
	&\Lambda({\bf t})
	= 	\frac{ p( {\bf t}| \mathcal{H}_1, \widehat{\mathbf{\Theta}}_{\MAP}^{\mathcal{H}_1})                        }{
p( {\bf t}| \mathcal{H}_0, \widehat{\mathbf{\Theta}}_{\MAP}^{\mathcal{H}_0})             }
	\\&= \frac{ \prod^{l}_{i=1}  \frac{1}{\sqrt{2\pi} (\sigma_W + \sigma_{\delta})} \exp\{ - \frac{1}{2(\sigma_W^2 + \sigma^2_{\delta})} (t_i - \frac{d(\widehat{\mathbf{\Theta}}_{\MAP}^{\mathcal{H}_1}, {\bf x}_i)}{v_p})^2 \}    }{                           \prod^{l}_{i=1}  \frac{1}{\sqrt{2\pi} \sigma_W} \exp\{ - \frac{1}{2\sigma_W^2} (t_i - \frac{d(\widehat{\mathbf{\Theta}}_{\MAP}^{\mathcal{H}_0}, {\bf x}_i)}{v_p})^2 \}
}
	 \quad
	\underset{\mathcal{H}_0}{\overset{\mathcal{H}_1}{\gtrless}}
 	\eta. \tageq
	\end{align*}

\subsection{Derivation of MAP Estimate}
\label{Appendix_mapestimate}
The MAP estimate is given by

	\begin{align*}
	\widehat{\mathbf{\Theta}}_{\MAP}^{\mathcal{H}_j}
	&= \argmax_{\mathbf{\Theta}}  p(\mathbf{\Theta} | {\bf t}, {\bf r}, \mathcal{H}_j)
	\\
	&= \argmax_{\mathbf{\Theta}}  p({\bf t}, {\bf r} | \mathbf{\Theta}, \mathcal{H}_j) p(\mathbf{\Theta} | \mathcal{H}_j)
	\\
	&= \argmax_{\mathbf{\Theta}}  p({\bf t} | {\bf r}, \mathbf{\Theta}, \mathcal{H}_j) P({\bf r} | \mathbf{\Theta}, \mathcal{H}_j) p(\mathbf{\Theta} | \mathcal{H}_j)
	\\
	&= \argmax_{\mathbf{\Theta}}  \prod_{i=1}^{n}
\Big[  p( t_i | r_i, \mathbf{\Theta}, \mathcal{H}_j)\mathds{1}(r_i = 1) + \mathds{1}(r_i = 0)
\Big]
	\\
	& \quad \times
p(r_i | \mathbf{\Theta}, \mathcal{H}_j) p(\mathbf{\Theta} | \mathcal{H}_j)
	\\
	%
 &= \argmax_{\mathbf{\Theta}}  \sum_{i=1}^{n}
\log
	\Big[
p( t_i | r_i, \mathbf{\Theta}, \mathcal{H}_j) \mathds{1}(r_i = 1)
\\
	& \quad
+  \mathds{1}(r_i = 0)
\Big]
+
\log p(r_i | \mathbf{\Theta}, \mathcal{H}_j)
p(\mathbf{\Theta} | \mathcal{H}_j)
	\\
	&= \argmax_{\mathbf{\Theta}}  \sum_{i=1}^{n}
	\Big[ \log p( t_i | r_i, \mathbf{\Theta}, \mathcal{H}_j) \mathds{1}(r_i = 1)
	\\*
	& \quad +
 \log p(r_i | \mathbf{\Theta}, \mathcal{H}_j)
\Big] + \log p(\mathbf{\Theta} | \mathcal{H}_j)
	\\
	&= \argmax_{\mathbf{\Theta}}  \sum_{i=1}^{n}
	 \log \mathcal{N}(t_i;  \frac{d(\mathbf{\Theta}, {\bf x}_i)}{v_p} +\delta_i, \sigma^2_{W})
 \mathds{1}(r_i = 1)
	\\*
	& \,\,\,\,\,\, + \sum_{i=1}^{n}
\log P(r_i = 1 | \mathbf{\Theta}, \mathcal{H}_j) \mathds{1}(r_i = 1)
	\\*
	& \quad
+ P(r_i = 0 | \mathbf{\Theta}, \mathcal{H}_j) \mathds{1}(r_i = 0)
+
\log p(\mathbf{\Theta} | \mathcal{H}_j), \tageq
	\label{eqn:mapestimate}
	\end{align*}

\noindent
where 
$
\log
	\Big[
p( t_i | r_i, \mathbf{\Theta}, \mathcal{H}_j) \mathds{1}(r_i = 1)
+  \mathds{1}(r_i = 0)
\Big] $

\noindent

 \quad\quad\quad $=
\begin{cases}
    \log p( t_i | r_i, \mathbf{\Theta}, \mathcal{H}_j) & \text{if } r_i = 1, \\
    \log(1) = 0              & \text{if }r_i = 0.
\end{cases}
	$

\noindent
The indicator function $\mathds{1}(\cdot)$ is used to ensure that the product is non-zero when no delay measurements are received.
The probability of
receiving the observed delay measurement by anchor $i$ (given that the node is audible) is given by:
	\begin{equation*}
	p( t_i | r_i = 1, \mathbf{\Theta})  = \mathcal{N}(t_i;  \frac{d(\mathbf{\Theta}, {\bf x}_i)}{v_p} +\delta_i, \sigma^2_{W}),
	\end{equation*}
and the probability of anchor $i$ receiving a signal with a RSS value that is greater or equal to the minimum signal receiving threshold $\lambda$ is given by:
	\begin{equation*}
	\begin{split}
	P(r_i = 1 | \mathbf{\Theta})  & = \int^{\infty}_{\lambda}
	\mathcal{N}(j;   P_{t} - 10 \alpha \log \frac{d(\mathbf{\Theta}, {\bf x}_i)}{d_0}, \sigma^2_{\epsilon})   \, dj
\\
 	& = 1 - \Phi \left( \frac{\lambda - P_{t} + 10 \alpha \log \frac{d(\mathbf{\Theta}, {\bf x}_i)}{d_0} } { \sigma_{\epsilon} } \right).
	\end{split}
	\end{equation*}
On the other hand, we can compute the probability that anchor $i$ does not receive a delay measurement, which is given by
	\begin{equation*}
	P(r_i = 0 | \mathbf{\Theta}) =
 1 - P(r_i = 1 | \mathbf{\Theta}).
	\end{equation*}
Therefore, the generalized likelihood function can be expressed as:
	\begin{align*}
	&p( {\bf t}, {\bf r} |\mathcal{H}_j, \widehat{\mathbf{\Theta}}_{\MAP}^{\mathcal{H}_j} ) 		
=
p( {\bf t}| {\bf r}, \mathcal{H}_j, \widehat{\mathbf{\Theta}}_{\MAP}^{\mathcal{H}_j})
p( {\bf r}| \mathcal{H}_j, \widehat{\mathbf{\Theta}}_{\MAP}^{\mathcal{H}_j})
	\\
&=
	\prod^{n}_{i = 1}  \left[
	\mathcal{N}(t_i;   \frac{d(\widehat{\mathbf{\Theta}}_{\MAP}^{\mathcal{H}_j}, {\bf x}_i)}{v_p}, \sigma^2_{W})  \mathds{1}(r_i = 1) + \mathds{1}(r_i = 0) \right]
	\\
&    \quad
	\times
	\prod^{n}_{i = 1}  \Bigg[ P(r_i = 1 | \widehat{\mathbf{\Theta}}_{\MAP}^{\mathcal{H}_j}) \mathds{1}(r_i = 1)
	+ 	P(r_i = 0 | \widehat{\mathbf{\Theta}}_{\MAP}^{\mathcal{H}_j}) \mathds{1}(r_i = 0) \Bigg]
	\\
&=
	\prod^{n}_{i = 1}  \left[
	\mathcal{N}(t_i;   \frac{d(\widehat{\mathbf{\Theta}}_{\MAP}^{\mathcal{H}_j}, {\bf x}_i)}{v_p}, \sigma^2_{W})  \mathds{1}(r_i = 1) + \mathds{1}(r_i = 0) \right]
	\\
&   \quad
	\times
	\prod^{n}_{i = 1}
	\left[
	1 - \Phi \left( \frac{\lambda - P_{t} + 10 \alpha \log \frac{d(\widehat{\mathbf{\Theta}}_{\MAP}^{\mathcal{H}_j}, {\bf x}_i)}{d_0} } { \sigma_{\epsilon} } \right) \right] \mathds{1}(r_i = 1)
	\\
	& \quad
+
	\Phi \left( \frac{\lambda - P_{t} + 10 \alpha \log \frac{d(\widehat{\mathbf{\Theta}}_{\MAP}^{\mathcal{H}_j}, {\bf x}_i)}{d_0} } { \sigma_{\epsilon} } \right)
\mathds{1}(r_i = 0)  .
	\end{align*}

\subsection{Proof of Theorem 1}
\label{Appendix_theorem}
We let the distance related terms
	$\psi_i = \frac{d(\mathbf{\Theta}, {\bf x}_i)}{v_p}$,
	$\psi_i' = \frac{  d(\mathbf{\Theta}, {\bf x}_i)  }{d_0}$,  and
 	$\mu' = \frac{\mu_{\delta}}{d_0}$.
It can be shown that the detection and false alarm rates for the conventional GLRT without audibility considerations are given by
\[
	P_{d | \psi}^{\rm NA} =
	\int_\gamma^{\infty}    p(z  |  \mathcal{H}_1, \psi   ) dz
	\;\;\text{and}\;\;
	P_{f | \psi}^{\rm NA} =
	\int_\gamma^{\infty}    p(z  |  \mathcal{H}_0, \psi   )   dz,
\]
respectively (see Appendix-\ref{Appendix_woaudibility2}), where $\gamma$ is a threshold, and $l$ is the number of received delay measurements.
On the other hand, the detection and false alarm rates for the proposed audibility-aware GLRT are given by
\[
	P_{d | \psi}^{\rm A} =
	\int_\gamma^{\infty} \!\! p(z + \Xi |  \mathcal{H}_1, \psi   ) dz
	\;\;\text{and}\;\;
	P_{f | \psi}^{\rm A} =
	\int_\gamma^{\infty}   \!\! p(z  |  \mathcal{H}_0, \psi   )   dz,
\]
respectively (see Appendix-\ref{Appendix_audibility}) where the term $\Xi$ (from \eqref{eqn:xi}) consists of the audibility-related probabilities.
Note that the false alarm rates for both cases are the same:
	\begin{equation*}
	P_{f | \psi}^{\rm A} = P_{f | \psi}^{\rm NA} =
	\int_\gamma^{\infty}    p(z  |  \mathcal{H}_0, \psi   )   dz.
	\end{equation*}
Hence, it can be seen that for a fixed false alarm rate $P_{f | \psi} $, the detection rates
	\begin{equation*}
	P_{d | \psi}^{\rm A} \geq P_{d | \psi}^{\rm NA},
	\end{equation*}
if $\Xi \leq 0$ holds since the complementary cdf function in both
$	P_{d | \psi}^{\rm A} $ and $ P_{d | \psi}^{\rm NA}$ is a non-increasing function.

Suppose that $\Xi \leq 0$ and $\mu_{\delta} > 0$ (which is true in our model).
From \eqref{eqn:xi}, the $\Xi$ term can be expressed as \eqref{eqn:xiterms}.

Next, we simplify the equation to obtain:
	\begin{equation}
	\begin{aligned}[b]
	\Xi &= \sum^{l}_{i=1} \ln    \frac{                1 - \Phi \left( \frac{\lambda - P_{t} + 10 \alpha \log (\psi'_i+\mu') } { \sigma_{\epsilon} } \right)
	}{1 - \Phi \left( \frac{\lambda - P_{t} + 10 \alpha \log \psi'_i } { \sigma_{\epsilon} } \right)}
\\&
	+ \sum^{n}_{i=l+1} \ln    \frac { \Phi \left( \frac{\lambda - P_{t} + 10 \alpha \log (\psi'_i-\mu') } { \sigma_{\epsilon} } \right)
	}{\Phi \left( \frac{\lambda - P_{t} + 10 \alpha \log \psi'_i } { \sigma_{\epsilon} } \right)   }
	\leq 0.
	\end{aligned}
	\end{equation}
Since  $\mu_{\delta} > 0$, then $\mu' = \frac{\mu_{\delta}}{d_0} > 0$ as $d_0 > 0$.
Because the logarithm function is strictly increasing for positive inputs, we have
	\begin{equation*}
 \Phi \left( \log ((\psi'_i-\mu')^{+})  \right)
 <
 \Phi \left( \log (\psi'_i)  \right)
 <
 \Phi \left( \log (\psi'_i+\mu')  \right).
	\end{equation*}
Note that $(\psi'_i-\mu')^{+}$ is strictly positive as it is not possible to receive a negative delay.
Similarly, this implies that
	\begin{equation*}
	\frac{                1 - \Phi \left( \frac{\lambda - P_{t} + 10 \alpha \log (\psi'_i+\mu') } { \sigma_{\epsilon} } \right)
	}{1 - \Phi \left( \frac{\lambda - P_{t} + 10 \alpha \log \psi'_i } { \sigma_{\epsilon} } \right)} < 1 ,
	\end{equation*}
and
	\begin{equation*}
	\frac { \Phi \left( \frac{\lambda - P_{t} + 10 \alpha \log (\psi'_i-\mu') } { \sigma_{\epsilon} } \right)
	}{\Phi \left( \frac{\lambda - P_{t} + 10 \alpha \log \psi'_i } { \sigma_{\epsilon} } \right)   }  < 1.
	\end{equation*}
Since the natural logarithmic function always has a negative value when the inputs are less than 1 and $\Xi$ consists of the summation of negative terms, hence, the statement $\Xi \leq 0$ must be true and
$P_{d | \psi}^{\rm A} \geq P_{d | \psi}^{\rm NA}$.

Subsequently, we can marginalize $P_{d | \psi_i}$ over all possible $\psi_i$ values to obtain
	\begin{equation*}
	P_d = \int P_{d | \psi_i}  \times p(\psi_i) \, d\psi_i .
	\end{equation*}
Therefore, for a fixed $P_f$, the following inequalities hold:
	\begin{equation*}
	P_d^{\rm A}  \geq P_d^{\rm NA},
	\end{equation*}
since their equivalent representations,
	\begin{equation*}
	\begin{split}
	\int_{\psi}
	\int_\gamma^{\infty}    p(z + \Xi |  \mathcal{H}_1, \psi   )       p(\psi_i) \,\,  dz d\psi_i
\\
\geq
\int_{\psi}
	\int_\gamma^{\infty}    p(z  |  \mathcal{H}_1, \psi   )      p(\psi_i)   dz d\psi_i   ,
	\end{split}
	\end{equation*}
where the following has already been proven to be true:
	\begin{equation*}
	\int_\gamma^{\infty}    p(z + \Xi |  \mathcal{H}_1, \psi   )   dz
\geq
	\int_\gamma^{\infty}    p(z  |  \mathcal{H}_1, \psi   )   dz .
	\end{equation*}
Hence, we complete the proof.
\hfill\IEEEQEDhere

\begin{figure*}
\begin{minipage}{\textwidth}
	\begin{equation}
	\begin{aligned}[b]
	\Xi = &2\sigma_W^2 (\sigma_W^2 +    \sigma^2_{\delta})  \Bigg[     \sum^{l}_{i=1} \ln                    \bigg(1 - \Phi \left( \frac{\lambda - P_{t} + 10 \alpha \log (\psi'_i+\mu') } { \sigma_{\epsilon} } \right) \bigg)
	+ \sum^{n}_{i=l+1} \ln    \Phi \left( \frac{\lambda - P_{t} + 10 \alpha \log (\psi'_i-\mu') } { \sigma_{\epsilon} } \right)
	\\&
	-  \sum^{l}_{i=1} \ln                            \bigg(1 - \Phi \left( \frac{\lambda - P_{t} + 10 \alpha \log \psi'_i } { \sigma_{\epsilon} } \right) \bigg)
	- \sum^{n}_{i=l+1} \ln   \Phi \left( \frac{\lambda - P_{t} + 10 \alpha \log \psi'_i } { \sigma_{\epsilon} } \right)     	\Bigg]  .
	\end{aligned}
\label{eqn:xiterms}
	\end{equation}
\end{minipage}

\hspace{\fill}\line(1,0){500}\hspace{\fill}
\end{figure*}

\subsection{Derivation of Detection and False Alarm Probabilities without  Audibility Considerations}
\label{Appendix_woaudibility2}

We denote the distance-related term as
	\begin{equation}
	\psi_i = \frac{d(\mathbf{\Theta}, {\bf x}_i)}{v_p}.
	\label{eqn:assumptionk}
	\end{equation}
We obtain the test statistic which does not take into account audibility as follows:
	\begin{equation}
	\begin{split}
	&\Lambda({\bf t})
	\\
	&= \frac{ \prod^{l}_{i=1}  \frac{1}{\sqrt{2\pi} (\sigma_W + \sigma_{\delta})} \exp\{ - \frac{1}{2 (\sigma_W^2 + \sigma^2_{\delta})} (t_i - \psi_i - \mu_{\delta})^2 \}    }{                           \prod^{l}_{i=1}  \frac{1}{\sqrt{2\pi} \sigma_W} \exp\{ - \frac{1}{2\sigma_W^2} (t_i - \psi_i)^2 \}
}
	\underset{\mathcal{H}_0}{\overset{\mathcal{H}_1}{\gtrless}}
 	\eta.
	\end{split}
	\end{equation}
Taking the logarithm on both sides, we obtain \eqref{eqn:deriveteststataudibility}.
\begin{figure*}[!t]
\begin{minipage}{\textwidth}
	\begin{equation}
	\begin{aligned}[b]
	\Lambda({\bf t}) &= \sum^{l}_{i=1}  \ln  \frac{1}{\sqrt{2\pi} (\sigma_W + \sigma_{\delta})}
- \sum^{l}_{i=1}   \frac{(t_i - \psi_i - \mu_{\delta})^2 }{2 (\sigma_W^2 +    \sigma^2_{\delta})}
- \sum^{l}_{i=1}   \ln  \frac{1}{\sqrt{2\pi} \sigma_W}
+ \sum^{l}_{i=1}   \frac{(t_i - \psi_i)^2 }{2\sigma_W^2}
	\\&=
\ln  \frac{ \sigma_W }{(\sigma_W + \sigma_{\delta})}
- \sum^{l}_{i=1}   \frac{(t_i - \psi_i - \mu_{\delta})^2 }{2 (\sigma_W^2 +    \sigma^2_{\delta})}
+ \sum^{l}_{i=1}   \frac{(t_i - \psi_i)^2 }{2\sigma_W^2}
	\\&=
\ln  \frac{\sigma_W }{(\sigma_W + \sigma_{\delta})}
+ \sum^{l}_{i=1}   \frac{ (\sigma_W^2 +    \sigma^2_{\delta})
(t_i^2 + \psi_i^2 - 2t_i \psi_i)
- \sigma_W^2 (t_i^2+ \mu_{\delta}^2  + \psi_i^2 - 2\psi_i t_i - 2t_i \mu_{\delta} + 2\psi_i \mu_{\delta}) }{2\sigma_W^2 (\sigma_W^2 +    \sigma^2_{\delta})}
	\\&=
\ln  \frac{\sigma_W }{(\sigma_W + \sigma_{\delta})}
+ \sum^{l}_{i=1}
\frac{  \sigma^2_{\delta} t_i^2 + 2 \mu_{\delta} \sigma_W^2 t_i - 2\sigma^2_{\delta} \psi_i t_i + \sigma^2_{\delta} \psi_i^2  - 2 \psi_i \mu_{\delta} \sigma_W^2 - \mu_{\delta}^2 \sigma_W^2  }{2\sigma_W^2 (\sigma_W^2 +    \sigma^2_{\delta})}
	\underset{\mathcal{H}_0}{\overset{\mathcal{H}_1}{\gtrless}}
 	 \ln \eta
\\&\hspace{-0.5cm} \text{Next, we shift some terms over to the RHS,}
	\\&
 \sum^{l}_{i=1}
 \sigma^2_{\delta} t_i^2 + 2 \mu_{\delta} \sigma_W^2 t_i - 2\sigma^2_{\delta} \psi_i t_i
	\underset{\mathcal{H}_0}{\overset{\mathcal{H}_1}{\gtrless}}
 	2\sigma_W^2 (\sigma_W^2 +    \sigma^2_{\delta})
\ln  \left(
\frac{\eta(\sigma_W + \sigma_{\delta})  }{ \sigma_W}  \right)
+ \sum^{l}_{i=1}    2 \psi_i \mu_{\delta} \sigma_W^2 + \mu_{\delta}^2 \sigma_W^2
- \sigma^2_{\delta} \psi_i^2
	\\&
 \sum^{l}_{i=1}
 \sigma^2_{\delta} t_i^2 + 2 \mu_{\delta} \sigma_W^2 t_i - 2\sigma^2_{\delta} \psi_i t_i
	\underset{\mathcal{H}_0}{\overset{\mathcal{H}_1}{\gtrless}}
\gamma.
	\end{aligned}
	\label{eqn:deriveteststataudibility}
	\end{equation}
\end{minipage}
\hspace{\fill}\line(1,0){500}\hspace{\fill}
\end{figure*}

Now, let $Z = \sum^{l}_{i=1}
 \sigma^2_{\delta} t_i^2 + 2 \mu_{\delta} \sigma_W^2 t_i - 2\sigma^2_{\delta} \psi_i t_i$
and  $\gamma$ be the threshold.
The detection probability for the non-audibility-aware GLRT is given by
	\begin{equation}
	\begin{split}
	P_{d | \psi}^{\rm NA} &=
	P( z > \gamma  |  \mathcal{H}_1, \psi   )
	=
	\int_\gamma^{\infty}    p(z  |  \mathcal{H}_1, \psi   ) dz,
	\end{split}
	\end{equation}
and the false alarm probability is given by
	\begin{equation}
	\begin{split}
	P_{f | \psi}^{\rm A} &=
	P( z > \gamma  |  \mathcal{H}_0, \psi   )
	 =
	\int_\gamma^{\infty}    p(z  |  \mathcal{H}_0, \psi   )   dz.
	\end{split}
	\end{equation}

\subsection{Derivation of Detection and False Alarm Probabilities with Audibility Considerations}
\label{Appendix_audibility}

From the test statistic derived in \eqref{eqn:glrtaudibiltiy}, we obtain:
	\begin{align*}
	&\Lambda({\bf t}, {\bf r})
	\\&= \Bigg(
 	\prod^{n}_{i = 1}
\mathcal{N}(t_i;  \psi_i + \mu_{\delta}, \sigma^2_{W} + \sigma^2_{\delta})\mathds{1}(r_i = 1) + \mathds{1}(r_i = 0)
	\\& \quad\quad \times
                \prod^{n}_{i = 1}  \Big[ P(r_i = 1 | \widehat{\mathbf{\Theta}}_{\MAP}^{\mathcal{H}_1}) \mathds{1}(r_i = 1)
	\\&
	\quad\quad\quad\quad + 	P(r_i = 0 | \widehat{\mathbf{\Theta}}_{\MAP}^{\mathcal{H}_1}) \mathds{1}(r_i = 0) \Big]         \Bigg)
	\\&  \Bigg/ \Bigg(
\prod^{n}_{i = 1} \mathcal{N}(t_i;  \psi_i, \sigma^2_{W})\mathds{1}(r_i = 1) + \mathds{1}(r_i = 0)
	\\&  \quad\quad \times
	       \prod^{n}_{i = 1}  \Big[ P(r_i = 1 | \widehat{\mathbf{\Theta}}_{\MAP}^{\mathcal{H}_0}) \mathds{1}(r_i = 1)
	\\&
	\quad\quad\quad\quad + 	P(r_i = 0 | \widehat{\mathbf{\Theta}}_{\MAP}^{\mathcal{H}_0}) \mathds{1}(r_i = 0) \Big]   \Bigg)
	\underset{\mathcal{H}_0}{\overset{\mathcal{H}_1}{\gtrless}}
 	\eta. \tageq
	\label{eqn:teststataudibility}
	\end{align*}
We further let  $\psi'_i = \frac{ d(\mathbf{\Theta}, {\bf x}_i)  }{d_0}$ and $\mu' = \frac{\mu_{\delta}}{d_0}$ to obtain the following audibility related equations under $\mathcal{H}_0$:
	\begin{equation}
	\begin{aligned}[b]
	P(r_i = 0 | \widehat{\mathbf{\Theta}}_{\MAP}^{\mathcal{H}_0})
	= \Phi \left( \frac{\lambda - P_{t} + 10 \alpha \log \psi'_i } { \sigma_{\epsilon} } \right),
	\\
	P(r_i = 1 | \widehat{\mathbf{\Theta}}_{\MAP}^{\mathcal{H}_0})
	= 1 - \Phi \left( \frac{\lambda - P_{t} + 10 \alpha \log \psi'_i } { \sigma_{\epsilon} } \right).
	\end{aligned}
	\end{equation}
Under $\mathcal{H}_1$, the adversary adds additional delays to the delay measurements such that the estimated distance to an anchor will be enlarged if the anchor receives a measurement and decreased if there is an inaudible scenario. The latter is due to the fact that the estimated target location will tend to be closer towards the inaudible anchors as illustrated in Fig.~\ref{fig:howitworks}.
As such, we obtain:
	\begin{equation}
	\begin{split}
	P(r_i = 0 | \widehat{\mathbf{\Theta}}_{\MAP}^{\mathcal{H}_1})
	= \Phi \left( \frac{\lambda - P_{t} + 10 \alpha \log (\psi'_i -\mu') } { \sigma_{\epsilon} } \right),
	\\
	P(r_i = 1 | \widehat{\mathbf{\Theta}}_{\MAP}^{\mathcal{H}_1})
	= 1 - \Phi \left( \frac{\lambda - P_{t} + 10 \alpha \log (\psi'_i +\mu') } { \sigma_{\epsilon} } \right).
	\end{split}
	\end{equation}
Substituting the above audibility terms into \eqref{eqn:teststataudibility} and taking logarithm on both sides, the test statistic becomes

	\begin{align*}
	&\Lambda({\bf t}, {\bf r})     = \sum^{l}_{i=1}  \ln  \frac{1}{\sqrt{2\pi} (\sigma_W + \sigma_{\delta})}
- \sum^{l}_{i=1} \frac{ (t_i - \psi_i - \mu_{\delta})^2 }{2 (\sigma^2_W + \sigma^2_{\delta}) }
	\\&
	+  \sum^{l}_{i=1} \ln                   \bigg(1 - \Phi \left( \frac{\lambda - P_{t} + 10 \alpha \log (\psi'_i +\mu') } { \sigma_{\epsilon} } \right) \bigg)
	\\&
	+ \sum^{n}_{i=l+1} \ln    \Phi \left( \frac{\lambda - P_{t} + 10 \alpha \log (\psi'_i -\mu') } { \sigma_{\epsilon} } \right)
	\\&
	-  \Bigg[ \sum^{l}_{i=1}   \ln  \frac{1}{\sqrt{2\pi} \sigma_W}
	- \sum^{l}_{i=1}     \frac{ (t_i - \psi_i)^2  }{2 \sigma^2_W }
	\\&
	+  \sum^{l}_{i=1} \ln                            \bigg(1 - \Phi \left( \frac{\lambda - P_{t} + 10 \alpha \log \psi'_i } { \sigma_{\epsilon} } \right) \bigg)
	\\&
	+ \sum^{n}_{i=l+1} \ln   \Phi \left( \frac{\lambda - P_{t} + 10 \alpha \log \psi'_i } { \sigma_{\epsilon} } \right)        \Bigg]
	\underset{\mathcal{H}_0}{\overset{\mathcal{H}_1}{\gtrless}}
 	 \ln \eta. \tageq
	\end{align*}

Next, we simplify and rearrange the terms to get
	\begin{align*}
	&\Lambda({\bf t}, {\bf r})
	\\&=  \ln \frac{ \sigma_W }{(\sigma_W + \sigma_{\delta})}
- \sum^{l}_{i=1} \frac{ (t_i - \psi_i - \mu_{\delta})^2 }{2 (\sigma^2_W + \sigma^2_{\delta}) }
	+ \sum^{l}_{i=1}     \frac{ (t_i - \psi_i)^2  }{2 \sigma^2_W }
	\\&
	+ \Bigg[     \sum^{l}_{i=1} \ln                    \bigg(1 - \Phi \left( \frac{\lambda - P_{t} + 10 \alpha \log (\psi'_i+\mu') } { \sigma_{\epsilon} } \right) \bigg)
	\\&
	+ \sum^{n}_{i=l+1} \ln    \Phi \left( \frac{\lambda - P_{t} + 10 \alpha \log (\psi'_i-\mu') } { \sigma_{\epsilon} } \right)
	\\&
	-  \sum^{l}_{i=1} \ln                            \bigg(1 - \Phi \left( \frac{\lambda - P_{t} + 10 \alpha \log \psi'_i } { \sigma_{\epsilon} } \right) \bigg)
	\\&
	- \sum^{n}_{i=l+1} \ln   \Phi \left( \frac{\lambda - P_{t} + 10 \alpha \log \psi'_i } { \sigma_{\epsilon} } \right)     	\Bigg]
	\underset{\mathcal{H}_0}{\overset{\mathcal{H}_1}{\gtrless}}
 	 \ln \eta. \tageq
\label{eqn:derive_xi}
	\end{align*}
Finally, we obtain
	\begin{align*}
	&\Lambda({\bf t}, {\bf r})    = \ln \frac{ \sigma_W }{(\sigma_W + \sigma_{\delta})}
- \sum^{l}_{i=1} \frac{ (t_i - \psi_i - \mu_{\delta})^2 }{2 (\sigma^2_W + \sigma^2_{\delta}) }
	+ \sum^{l}_{i=1}     \frac{ (t_i - \psi_i)^2  }{2 \sigma^2_W }
	\\&
\quad\quad\quad\quad + \sum^{n}_{i=1}\Xi_i
	\underset{\mathcal{H}_0}{\overset{\mathcal{H}_1}{\gtrless}}
 	2 \sigma^2_W \ln \eta,
\label{eqn:xi}
	\\&
 \sum^{l}_{i=1}
 \sigma^2_{\delta} t_i^2 + 2 \mu_{\delta} \sigma_W^2 t_i - 2\sigma^2_{\delta} \psi_i t_i
+ 2\sigma_W^2 (\sigma_W^2 +    \sigma^2_{\delta})  \sum^{n}_{i=1}\Xi_i
	\\& \quad
	\underset{\mathcal{H}_0}{\overset{\mathcal{H}_1}{\gtrless}}
 	2\sigma_W^2 (\sigma_W^2 +    \sigma^2_{\delta})
\ln  \left(
\frac{\eta(\sigma_W + \sigma_{\delta})  }{ \sigma_W}  \right)
	\\& \quad\quad
+ \sum^{l}_{i=1}    2 \psi_i \mu_{\delta} \sigma_W^2 + \mu_{\delta}^2 \sigma_W^2
- \sigma^2_{\delta} \psi_i^2,
	\\&
 \sum^{l}_{i=1}
 \sigma^2_{\delta} t_i^2 + 2 \mu_{\delta} \sigma_W^2 t_i - 2\sigma^2_{\delta} \psi_i t_i
+ 2\sigma_W^2 (\sigma_W^2 +    \sigma^2_{\delta})  \sum^{n}_{i=1}\Xi_i
	\underset{\mathcal{H}_0}{\overset{\mathcal{H}_1}{\gtrless}}
\gamma. \tageq
	\end{align*}
where $\Xi_i$ is some function of the audibility terms (fourth term of \eqref{eqn:derive_xi} in $[.]$ brackets) and is independent of the delay measurements ${\bf t}$.
Using the same $Z = \sum^{l}_{i=1}
 \sigma^2_{\delta} t_i^2 + 2 \mu_{\delta} \sigma_W^2 t_i - 2\sigma^2_{\delta} \psi_i t_i$
and  $\gamma$ as the previous Appendix-\ref{Appendix_woaudibility2}, and let \mbox{$\Xi = 2\sigma_W^2 (\sigma_W^2 +    \sigma^2_{\delta})  \sum^{n}_{i=1}\Xi_i$},
the detection probability for the audibility-aware GLRT is given by
	\begin{equation}
	\begin{split}
	P_{d | \psi}^{\rm A} &=
	P( z  + \Xi > \gamma  |  \mathcal{H}_1, \psi   )
	\\& =
	\int_\gamma^{\infty}    p(z  + \Xi  |  \mathcal{H}_1, \psi   ) dz,
	\end{split}
	\end{equation}
and the false alarm probability is given by
	\begin{equation}
	\begin{split}
	P_{f | \psi}^{\rm A} &=
	P( z > \gamma  |  \mathcal{H}_0, \psi   )
	\\& =
	\int_\gamma^{\infty}    p(z  |  \mathcal{H}_0, \psi   )   dz,
	\end{split}
	\end{equation}
as $\Xi = 0$ under $\mathcal{H}_0$ due to the audibility terms being canceled out by each other when $\mu_{\delta}=0$.


\end{document}